\newtheorem{thm}{Theorem}[section]
\newtheorem{lem}[thm]{Lemma}
\newcommand{\dd}{\mathrm{d}}
\newcommand{\del}{\partial}
\def\nn{\nonumber} 
\def\obar{\overline}
\numberwithin{equation}{section}
\def\a{\alpha}  \def\b{\beta}
 \def\g{\gamma} \def\G{\Gamma}
 \def\d{\delta} 
\def\l{\lambda} \def\L{\Lambda}  
    \def\r{\rho}
\def\s{\sigma}  \def\t{\tau}
\def\cA{{\cal A}}  \def\cC{{\cal C}} 
\def\cD{{\cal D}} \def\cE{{\cal E}}  
\def\cG{{\cal G}} \def\cH{{\cal H}}  
 \def\cK{{\cal K}} \def\cL{{\cal L}} 
\def\cM{{\cal M}}   
 \def\cQ{{\cal Q}} \def\cR{{\cal R}} 
 \def\cT{{\cal T}}
\def\R{{\mathbb R}} \def\C{{\mathbb C}} \def\N{{\mathbb N}}
 \def\one{\mbox{1 \kern-.59em {\rm l}}}
\def\msu{\mathfrak{su}}
\def\mso{\mathfrak{so}}
\newcommand{\Tr}{\mathrm{Tr}}
\newcommand{\End}{\mathrm{End}}
\def\hs{\mathfrak{hs}}
\def\Tr{\mbox{Tr}}
\def\und{\underline}
\newcommand{\diag}{\rm diag}
\newcommand{\eq}[1]{(\ref{#1})}
\sloppy \allowdisplaybreaks[3]
\begin{document}


\parindent=0cm

\renewcommand{\title}[1]{\vspace{10mm}\noindent{\Large{\bf#1}}\vspace{8mm}} 
\newcommand{\authors}[1]{\noindent{\large #1}\vspace{5mm}}
\newcommand{\address}[1]{{\itshape #1\vspace{2mm}}}


\begin{titlepage}
\begin{flushright}
 UWThPh-2020-5
\end{flushright}
\begin{center}
\title{ {\Large Higher-spin gravity and torsion \\[1ex] 
\ on quantized space-time in matrix models}  }

\vskip 3mm

\authors{Harold C.\ Steinacker{\footnote{harold.steinacker@univie.ac.at}}}

\vskip 3mm

 \address{ 
{\it Faculty of Physics, University of Vienna\\
Boltzmanngasse 5, A-1090 Vienna, Austria  } 
  }

\bigskip

\vskip 1.4cm

\textbf{Abstract}
\vskip 3mm

\begin{minipage}{14.5cm}%
\vskip 3mm

A geometric formalism is developed which allows to describe the non-linear regime of
higher-spin gravity emerging on a cosmological quantum space-time in the IKKT matrix model.
The vacuum solutions are  Ricci-flat up to  an
 effective vacuum energy-momentum tensor quadratic in the torsion,
 which  arises from a Weitzenb\"ock-type higher spin connection.
 Torsion is expected to be significant only at cosmic scales and around very massive objects, and 
 could behave like dark matter. 
 A non-linear equation for the torsion tensor is found, which encodes the Yang-Mills equations of the matrix model.
 The metric and torsion transform covariantly under a higher-spin generalization 
of volume-preserving diffeomorphisms, which arises from the gauge invariance of the matrix model.

\end{minipage}

\end{center}

\end{titlepage}

\tableofcontents
%
%
\section{Introduction}

Our present  understanding of gravity in terms of general relativity (GR)
is incomplete for several reasons. One  problem is that
GR is not renormalizable \cite{Goroff:1985th}, and hence does not define a unique 
quantum theory. This  is reflected 
in  notorious difficulties trying to quantize various formulations of GR.
Broader approaches towards a quantum theory of gravity include notably string theory,
which  leads to gravity in 10 dimensions. However, ad-hoc reductions 
to 3+1 dimensions lead to a lack of predictivity
known as the landscape problem. 

Since gravity is tied to the structure of space-time, a natural strategy is 
to develop a suitable ``quantum'' framework for space-time, based on generalized or noncommutative notions of geometry. However, there is little reason to expect that 
straightforward  attempts to mimic GR in such a framework 
would overcome these issues.
Moreover,  pathological long-distance effects arise generically on non-commutative spaces
due to virtual string-like non-local modes, known as UV/IR mixing  \cite{Minwalla:1999px,Matusis:2000jf,Steinacker:2016nsc}.
They cancel only in the maximally supersymmetric 
IKKT matrix model \cite{Ishibashi:1996xs}, which should thus have the best chance to describe  physics, 
leading to an unexpected link with string theory. 

However, it is not obvious how to obtain gravity  from Yang-Mills-type
matrix models such as the IKKT model. There are intriguing hints such as
non-local gauge transformations and Ricci-flat propagating metric fluctuations
\cite{Rivelles:2002ez,Yang:2006dk,Steinacker:2010rh}, but the presence of an anti-symmetric tensor $\theta^{\mu\nu}$ in 
space-time leads to a dangerous breaking of Lorentz invariance.
This problem can be overcome by considering a higher-spin 
generalization, where  $\theta^{\mu\nu}$ is replaced
by a twisted bundle of such tensors over space-time. This is realized 
 in a simple solution of the 
IKKT model with a mass term interpreted as cosmological FLRW  space-time $\cM$,  based on the doubleton 
representations of $\mso(4,2)$ \cite{Sperling:2019xar,Steinacker:2019awe}, cf. \cite{deMedeiros:2004wb}.
It leads to a higher spin gauge theory 
which is invariant under a higher spin 
generalization of volume-preserving diffeomorphisms, ghost-free at the linearized level, and
includes spin 2 gravitons.
A linearized Schwarzschild-like solution was also found \cite{Steinacker:2019dii}.
The  theory has intriguing structural similarities with 
Vasilievs higher-spin gravity \cite{Vasiliev:1990en,Didenko:2014dwa},
but also crucial differences\footnote{It may also be interesting to compare it with 
chiral higher spin gravity \cite{Ponomarev:2016lrm,Skvortsov:2018jea}, where an action 
can be written in light-cone gauge. An action formulation is also possible for 
conformal higher spin gravity \cite{Tseytlin:2002gz,Segal:2002gd}, notably in 3 dimensions
\cite{Pope:1989vj,Fradkin:1989xt,Grigoriev:2019xmp}.}:
it is defined through an action, both IR and UV scale parameters are 
present, and there are 5 propagating 
metric modes which could be interpreted as ``would-be massive'' gravitons.

In the present paper, we study that higher-spin theory at the non-linear level.
This is not  straightforward because the  model is 
of Yang-Mills type, there is no Einstein-Hilbert action,
and everything is based on Poisson brackets (or commutators).
The  gauge symmetry corresponds to generalized diffeomorphisms rather than local Lorentz 
transformations\footnote{A somewhat related proposal was put forward in \cite{Hanada:2005vr} 
where the matrices are interpreted as covariant derivatives, 
which formally leads to the Einstein equations. However this is not quite borne out here.},
hence it is not some reformulation of GR in the spirit of
MacDowell-Mansouri \cite{MacDowell:1977jt}. 
A more appropriate approach can be found in a paper by 
Langmann and Szabo (LS) \cite{Langmann:2001yr},
who pointed out that a dimensional reduction of 
a  gauge theory in 8-dimensional phase space with 
suitable constraints can be interpreted in terms of 4-dimensional teleparallel gravity through torsion.
Although their setup  does not provide a complete theory, a
similar strategy  provides a geometric understanding of the present model. 
See also \cite{Ho:2015bia,Bonora:2018ggh} for related work.

The first message of the present paper is indeed that torsion, rather than curvature, 
is the key to a geometric understanding of the matrix model. 
The reason is quite simple: the matrices $Z^a$, 
which are the fundamental degrees of freedom of the  model \eq{MM-action},
naturally define a frame $E^a = [Z^a,.]$, which defines the effective metric 
governing all propagating modes. Given this intrinsic frame,
it is natural to consider the associated Weitzenb\"ock connection $\nabla$, 
which is defined by $\nabla E^a = 0$. This connection has no curvature but torsion.
The crucial observation is that this torsion naturally encodes the field strength $[Z^a,Z^b]$ 
of the  model, which underlies the noncommutative gauge theory.
We will obtain a non-linear geometric equation for the torsion, which fully 
captures the underlying Yang-Mills equations of motion of the 
matrix model. This provides a useful description of the resulting higher-spin gravity on $\cM$.  
The Ricci tensor of the effective metric can then 
be computed using the equation of motion for the torsion.
It turns out that torsion leads to a 
specific and very interesting {\em modification} 
of the Einstein equations in vacuum here, in contrast to LS or teleparallel gravity.

At a  more technical level, the present model can be viewed as a reformulation
(rather than a reduction) of a 6-dimensional gauge theory 
to 4-dimensional {\em higher-spin} gravity via torsion.
The 6-dimensional  gauge theory lives on fuzzy $\C P^{1,2}$,
which is a quantized $S^2$ bundle over a 3+1-dimensional FLRW space-time $\cM$.
This is the  geometrical description of the underlying background solution 
as a coadjoint orbit of $SO(4,2)$. The local stabilizer 
group of a point on $\cM$ acts non-trivially on the local $S^2$ fiber,
so that the would-be Kaluza Klein modes become higher spin modes, leading
to  a higher spin theory on $\cM$  \cite{Sperling:2019xar,Steinacker:2016vgf}. 
The concept of torsion allows to fully describe this gauge theory in terms of 
higher-spin gravity. 
The gauge transformations of the underlying matrix model lead to covariant 
transformation laws for the metric and torsion, in terms of generalized (higher-spin) Lie derivatives. 
Although the underlying gauge invariance  is exact, its reformulation in terms of 
4-dimensional geometry is  valid  only in an asymptotic regime. 
Thus gravity and general covariance are understood as emergent phenomena.

The main result of this paper is a closed system of non-linear equations for the metric and the torsion 
in vacuum. Torsion is governed by a Yang-Mills-like equation  \eq{torsion-eq-nonlin-coord}
supplemented by a Bianchi identity \eq{Bianchi-full}, and
the Einstein equation is obtained with an effective energy-momentum tensor due to torsion \eq{Einstein-eq-vac}.
These are equations for higher-spin valued fields\footnote{For the higher spin contributions, the equations
are obtained only in an asymptotic regime, for wavelengths much shorter than the cosmic scale.}.
The cosmological background provides an exact solution  with torsion but without higher spin components,
corresponding to $\omega = -\frac 13$ in vacuum. 
Torsion  generally leads to deviations from Ricci-flatness, however this effect
is typically small in a weak gravity regime, as the energy-momentum tensor for torsion is quadratic. 
Nevertheless, a rough qualitative estimate  suggests that torsion  might lead to an apparent ``dark matter halo''
around very massive objects.

Most importantly, the present model has a non-perturbative definition  
as a Yang-Mills type matrix model, which should  make sense at the quantum level.
There is indeed intriguing evidence from numerical simulations   \cite{Kim:2011cr,Nishimura:2019qal} 
that an expanding 3+1-dimensional 
space-time structure  arises at the non-perturbative level.
It should therefore be possible in principle to test, justify and possibly improve the  analytical
studies with numerical simulations.

This paper considers only the vacuum sector of the theory, and 
the main open issue  is how matter acts as a source of torsion and curvature.
It is clear that the propagation of matter is governed by the effective metric,
and the manifest covariance strongly suggests that matter will lead to the usual source term for the 
Einstein tensor. However extra derivative terms should be expected, 
and quantum effects may be important here.
This needs to be clarified in future work.

The outline of the paper is as follows. We first recall the semi-classical description of the 
space-time and the underlying bundle in terms of Poisson manifolds.
A mathematical formalism is developed in section \ref{sec:bundle-hs},
where the  bundle structure is translated into higher-spin valued fields on 
space-time. Higher-spin valued Lie derivatives are introduced  
in section \ref{sec:gaugeinv-Lie}, which are the key to
covariance under higher-spin valued diffeomorphisms. 
The kinematical setup of the matrix model is translated into this language in section \ref{sec:action-frame}, which
allows to derive the equation of motion for torsion and the Ricci tensor in sections \ref{sec:conservation-law}
and \ref{sec:Ricci-eq}.
To validate these results, a more pedestrian derivation of the latter 
is given in section \ref{sec:alt-Ricci}.
The possible role of torsion as dark matter candidate is briefly discussed.
Some technical considerations are delegated to the appendix, including  the computation of the 
vacuum geometry in section \ref{sec:cosm-BG}.

%
%

%
%
%

\section{Matrix model and cosmological spacetime solution}
\label{sec:MM-semiclass}

Our starting point is the IKKT or IIB matrix model extended by a mass term,
\begin{align}
S[Z,\Psi] = {\rm Tr}\big( [Z^a,Z^b][Z_{a},Z_{b}] + 2 m^2 Z_a Z^a
\,\, + \overline\Psi \Gamma_a[Z^a,\Psi] \big) \ 
\label{MM-action}
\end{align}
where $Z^a \in \End(\cH), \ a=0,...,9$ are hermitian matrices acting on a Hilbert space $\cH$, and 
 $\Psi$ are Majorana-Weyl spinors whose entries are (Grassmann-valued) matrices.
 Indices are contracted with the $SO(9,1)$ -invariant tensor $\eta^{ab}$. 
This model has a manifest $SO(9,1)$ symmetry, and it is invariant under gauge transformations
\begin{align}
 Z^a \to U Z^a U^{-1}
 \label{gaugetrafo-matrix}
\end{align}
by unitary matrices $U \in U(\cH)$.
The model is maximally  supersymmetric for $m^2=0$ which is important for its 
quantization, but we will focus on the bosonic sector here.
The only mathematical structures in the matrix model are matrices 
and commutators, 
which reduce to functions and Poisson brackets in the semi-classical limit.
We must hence learn how to efficiently work with these,  and to cast the system 
into a recognizable geometric form. This depends very much on the background under consideration.
Dropping the fermions,
the model \eq{MM-action} leads to the following equations of motion
\begin{align}
 [Z_b,[Z^b, Z^a]] = m^2 Z^a  \ . 
 \label{matrix-eom}
\end{align}

\paragraph{Doubletons and quantized algebra of functions.}

The basic solution under consideration here is based on special representations of 
$\mso(4,2)$. Let $M^{ab}$ be $\mso(4,2)$ generators, which satisfy
\begin{align}
  [M_{ab},M_{cd}] &= i \left(\eta_{ac}M_{bd} - \eta_{ad}M_{bc} - 
\eta_{bc}M_{ad} + \eta_{bd}M_{ac}\right) \ 
 \label{M-M-relations-noncompact}
\end{align}
with indices $a,b = 0,...,5$.
Now consider the {\em doubleton representations} $\cH_n$ for $n\in\N$, which are 
minimal unitary highest-weight irreps which remain irreducible under $SO(4,1)$.
Then the matrices  
\begin{align}
T^\mu &= \frac 1R M^{\mu 4},  \qquad \qquad \mu=0,...,3 \nn\\
X^\mu &= r M^{\mu 5}, \qquad \qquad X^4 = r M^{4 5}
 \label{X-def}
\end{align}
 satisfy the commutation relations
\begin{subequations}
 \label{basic-CR-H4}
 \begin{align}
  [X^\mu,X^\nu] &= - i \, r^2 M^{\mu\nu}  \eqqcolon i \Theta^{\mu\nu} \,,
  \label{X-X-CR}\\
   [T^\mu,X^\nu] &= \frac{i}{R}\eta^{\mu\nu} X^4 \,,  \label{T-X-CR}\\
[T^\mu, T^\nu] &=  -\frac{i}{r^2 R^2} \Theta^{\mu  \nu} \, . \label{T-T-CR} 
%
\end{align}
\end{subequations}
It turns out that the operator algebra $End(\cH_n)$ can be viewed as 
quantized algebra of functions on a $S^2$ bundle over some 3+1-dimensional space-time $\cM$.
To see this, consider the semi-classical limit  $n \to \infty$ indicated by $\sim$, 
where $End(\cH_n)$
becomes a commutative algebra $\cC$ of functions generated by  
$x^\mu \sim X^\mu$ and $t^\mu \sim T^\mu$ , with Poisson brackets $\{.,.\} \sim -i [.,.]$ 
arising from the commutation relations.
For the doubleton representations under consideration, these generators satisfy 
additional relations, which in the semi-classical limit reduce to
\begin{subequations}
\label{geometry-H-M}
\begin{align}
 x_\mu x^\mu &= -R^2 - x_4^2 = -R^2 \cosh^2(\eta) \, , 
 \qquad R \sim \frac{r}{2}n   \label{radial-constraint}\\
 t_{\mu} t^{\mu}  &=  r^{-2}\, \cosh^2(\eta) \,, \\
 t_\mu x^\mu &= 0, \label{orth-xt} \\
 t_\mu \theta^{\mu\a} &= - \sinh(\eta) x^\a , \\
 x_\mu \theta^{\mu\a} &= - r^2 R^2 \sinh(\eta) t^\a , \label{x-theta-contract}\\
 \eta_{\mu\nu}\theta^{\mu\a} \theta^{\nu\b} &= R^2 r^2 \eta^{\a\b} - R^2 r^4 
t^\a t^\b + r^2 x^\a x^\b  
%
\end{align}
\end{subequations}
where $\mu, \a = 0,\ldots ,3$. One can also show that $\theta^{\mu\nu}$ can be expressed
as \cite{Sperling:2019xar}
\begin{align}
 \theta^{\mu\nu} &= \frac{r^2}{\cosh^2(\eta)} 
   \Big(\sinh(\eta) (x^\mu t^\nu - x^\nu t^\mu) +  \epsilon^{\mu\nu\a\b} x_\a t_\b \Big) \ .
   \label{theta-general}
\end{align}
Thus $X^\mu$ can  be interpreted as quantized  functions 
\begin{align}
 X^\mu \sim x^\mu: \quad \cM \ \hookrightarrow \ \R^{3,1}, \qquad  \ \mu = 0,...,3 \ ,
\end{align}
onto the region $-x_\mu x^\mu \geq R^2$.
$\cM$ turns out to be a two-sheeted cosmological FLRW space-time with manifest $SO(3,1)$ symmetry.
\begin{figure}
\begin{center}
 \includegraphics[width=0.5\textwidth]{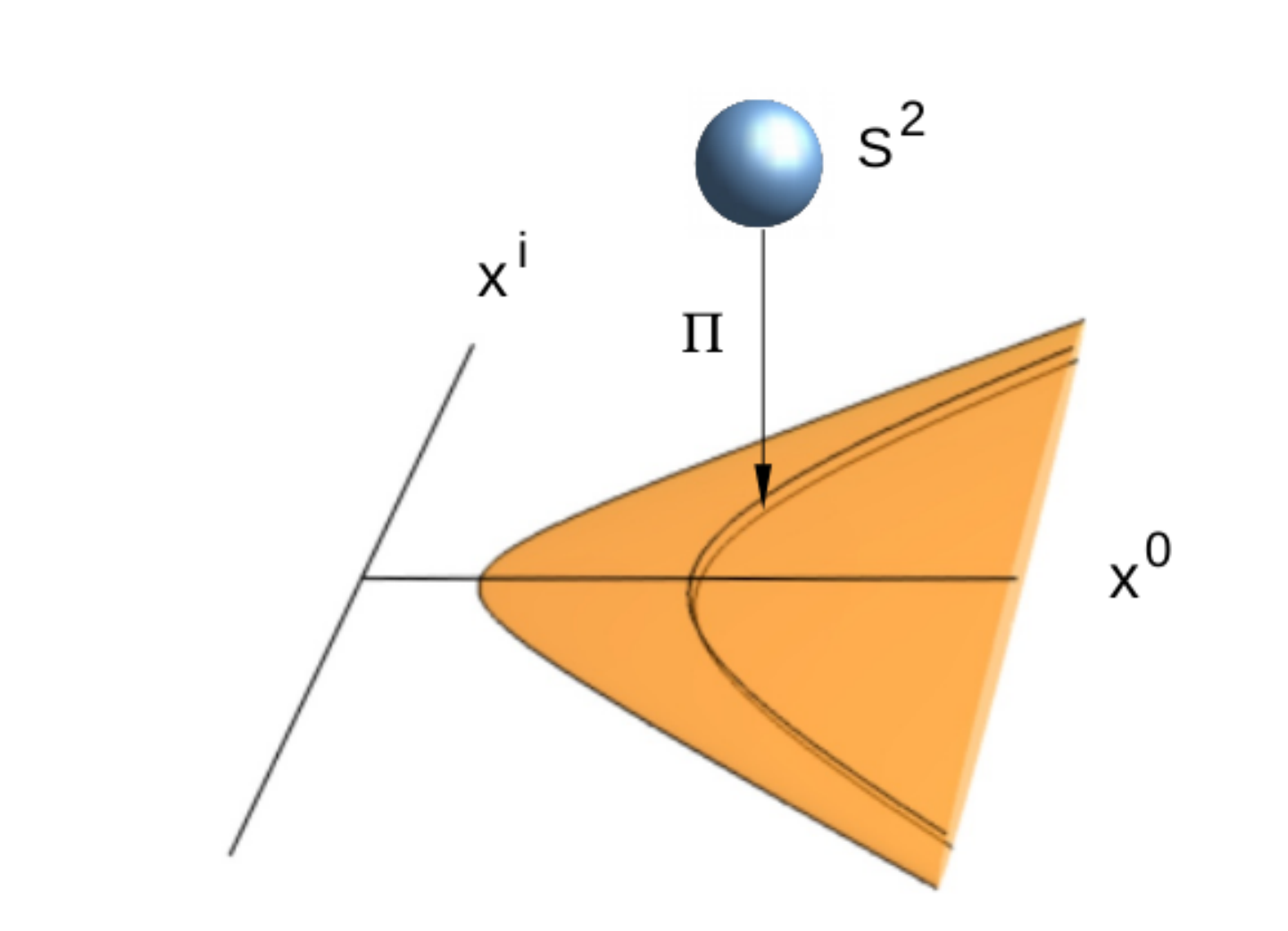}
 \caption{$S^2$ bundle over $\cM$ with bundle projection $\Pi$.}
\end{center}
 \label{fig:projection}
\end{figure}
Here $\eta$ is a global time coordinate defined by
\begin{align}
 R \sinh(\eta) = \pm \sqrt{-x_\mu x^\mu-R^2} \ = x^4 \ ,
 \label{x4-eta-def}
\end{align}
which is related to the scale parameter of the universe \cite{Steinacker:2019dii}
\begin{align}
 a(t)^2 =  R^2 \cosh^{2}(\eta) \sinh(\eta) \ .
 \label{cosm-scale}
\end{align}
The sign of $\eta$ separates the two sheets of $\cM$,
 with a big bounce at $\eta = 0$. Similarly, the $t^\mu$
are extra generators which describe the internal $S^2$ fiber over every point on $\cM$, see figure \ref{fig:projection}.
This $S^2$ is space-like due to \eq{orth-xt} with radius $r^{-2}\cosh^2(\eta)$.
Together, $x^\mu$ and $t^\mu$ generate the algebra $\cC \cong \cC^\infty(\C P^{1,2})$ 
of functions on a 6-dimensional bundle, which 
turns out to be $\C P^{1,2}$. This is an 
$SO(4,1)$- equivariant bundle over $H^4$, which projects onto $\cM$. 
As explained in \cite{Sperling:2019xar,Steinacker:2019fcb} there is an equivariant quantization map
\begin{align}
 \cQ: \quad \cC =  \cC^\infty(\C P^{1,2}) &\to End(\cH_n)  
 \label{quantization-map}
\end{align}
which allows to identify operators with functions up to some cutoff. 
This paper is devoted to  the semi-classical limit, 
replacing the rhs with the lhs, and 
commutators by Poisson brackets.
In particular,
the relation \eqref{T-X-CR} implies that
the derivations
\begin{align}
 -i[T^\mu,.] \sim \{t^\mu,.\}  \ = \sinh(\eta) \del_\mu
\end{align}
act as momentum generators on $\cM$,
leading to  the useful relation 
\begin{align}
 \del_\mu \phi = \frac{1}{\sinh(\eta)} \{t_\mu,\phi\}
 \label{del-t-rel}
\end{align}
for $\phi = \phi(x)$.

\paragraph{Higher spin sectors and Poisson brackets.}

By expanding the algebra $\cC$ into polynomials of minimal degree $s$ in $t^\mu$, 
we obtain a decomposition
\begin{align}
 \cC = \cC^\infty(\C P^{1,2}) = \bigoplus\limits_{s=0}^\infty \cC^s
\end{align}
 into spin $s$ sectors $\cC^s$. These sectors can also be defined in terms of a $\mso(4,2)$ Casimir \cite{Sperling:2019xar}. Here $\cC^0$ are functions of $x$,
 and $\cC^s \ni \phi_{\und\a}(x) t^{\und\a}$ where $t^{\und\a}\equiv t^{\a_1 ... \a_s}$.
 The projection of $\phi\in\cC$ to 
 $\cC^s$ will be denoted by $[\phi]_s$ or $\phi^{(s)}$.
 The constraints \eq{geometry-H-M} reveal that
the multiplication respects this grading as follows:
\begin{align}
 \cC^s \cdot\cC^{s'} \ \in \ \cC^{s+s'} \oplus \cC^{s+s'-2} \oplus ... \oplus \cC^{|s-s'|} \ .
\end{align}
In particular, each $\cC^s$ is a $\cC^0$ module, so that
$\cC$ can be viewed as space of sections of some vector bundle over $\cM$. 
This is a  module of higher-spin fields,
and the bundle structure is  encoded in the sub-algebra $\cC^0\subset\cC$.
The Poisson structure respects this as follows:
\begin{align}
 \{t,\cC^{s}\} &\in \cC^{s} \nn\\
 \{x,\cC^{s}\} &\in \cC^{s+1} \oplus  \cC^{s-1}
\end{align}
and the general structure follows from the derivation property:
\begin{align}
  \{\phi_\a(x) t^\a,\cC^{s}\} &=  \phi_\a(x) \{t^\a,\cC^{s}\}  + t^\a\{\phi_\a(x) ,\cC^{s}\}  
   \qquad \in \ \cC^{s+2} \oplus  \cC^{s} \oplus  \cC^{s-2} \ .
\end{align}
This is a rather complicated structure, but we will argue in the following
that typically the terms involving
space-time derivatives $\del$
are dominant, except possibly in the extreme IR.

\paragraph{Background solution and linearized fluctuations.}

The above geometry provides the geometric interpretation of the following 
{\em  solution} of \eq{matrix-eom} \cite{Sperling:2019xar}
\begin{align}
 \bar Z^\mu = T^\mu = \frac 1R  M^{\mu 4}, \qquad \mu =0,...,3 \qquad \mbox{for} \ \ R^{-2} = \frac 13 m^2  
 \label{T-solution}
\end{align}
with all remaining matrices $\bar Z^a, \ a=4,...,9$ set to zero.
This is the cosmic background under consideration. We will also consider more general solutions
$Z^\mu\in End(\cH_n)$ of \eq{matrix-eom},
which can be viewed as deformations thereof.
Any such background defines a matrix d'Alembertian 
\begin{align}
 \Box := [Z^\mu,[Z_\mu,.]] \  \sim \ -\{Z^\mu,\{Z_\mu,.\}\}  \ 
 \label{Box-def}
\end{align}
which acts on   $\phi \in  \End(\cH_n) \sim \cC^\infty(\C P^{1,2})$, and
will play a  central role in the following. 
 In the semi-classical limit,
we can consider these $Z^\mu$ as elements of the Poisson algebra $\cC$ using the correspondence \eq{quantization-map},  
i.e. as functions on the cosmic background.

On any such background $Z^\mu$, the matrix model defines an action for the fluctuations 
\begin{align}
 Z^\mu \to Z^\mu + \cA^\mu, \qquad \cA^\mu  \in \cC \ .
\end{align}
We briefly recall the main results 
for these fluctuation modes $\cA$, which
transform under a higher-spin-type gauge invariance inherited from \eq{gaugetrafo-matrix}.
The spectrum of square-integrable on-and off-shell modes  $\cA$ on $\bar Z^\mu$ was fully classified in \cite{Steinacker:2019awe}.
It turns out that off-shell, there are 4 towers of higher spin modes for any spin $s> 0$, and 2 towers for $s=0$.
Among these, there are two towers of physical (=gauge-fixed on-shell) higher spin modes for any spin $s\geq 1$.
These towers are truncated with maximal spin\footnote{There 
is a subtlety due to the fact that only the space-like local rotations are manifest on the FLRW background
but boosts are not.
With ``spin $s$ modes'' we refer to the 4-dimensional spin from the underlying $H^4$ point of view, which contain  $2s+1$ degrees of freedom   
corresponding to massive modes. However the mass is essentially zero, so that they decompose further under the FLRW isometry group $SO(3,1)$
into different massless (transverse traceless) modes. For example, the 5 dof of a spin 2 mode decomposes into 
2 dof of a massless graviton, 2 dof of a massless vector field and a scalar.
For further details we refer to \cite{Steinacker:2019awe,Sperling:2019xar}.} $n$. In the semi-classical limit $n\to\infty$, 
the truncation disappears, and all physical modes are shown to have  positive-definite inner product. 

In the remainder of this paper, we will develop a geometric description of the vacuum solutions in the non-linear regime,
avoiding the split into background and fluctuations.
The detailed relation with the above  modes should be clarified in future work.

\subsection{Bundle structure and higher-spin fields as push-forwards}
\label{sec:bundle-hs}

The crucial geometrical observation is as follows. The bundle projection $\Pi:\C P^{1,2} \to \cM$
(see figure \ref{fig:projection})
allows to map  vectors from $\C P^{1,2}$ to $\cM$ via the push-forward,
but it does  not  map  vector fields on $\C P^{1,2}$ to vector fields on $\cM$
because $\Pi$ is not injective. 
However, one can make sense of this as {\em push-forward of vector fields on $\C P^{1,2}$
to $\hs$-valued vector fields on $\cM$},
\begin{align}
 \Pi_*: \quad \Gamma \C P^{1,2} \  \to  \ \Gamma \cM \otimes \hs \ 
 \label{projection-hs}
\end{align}
and similarly for tensor products of vector fields.
Here $\hs$ is the space of functions on the fiber $S^2$, which is spanned by 
polynomials in the $t^\mu$.  The algebra $\cC$ of
functions  on $\C P^{1,2}$ is viewed as algebra of $\hs$-valued functions on $\cM$.
This is the crucial concept  which will allow
to translate the gauge theory on $\C P^{1,2}$ as a higher-spin gauge theory on $\cM$,
which can be viewed as generalized gravity theory\footnote{This is the crucial step beyond the work \cite{Langmann:2001yr}.}. 
Note that the map \eq{projection-hs} respects\footnote{This is the structure of an equivariant bundle, which is in contrast to
 standard Kaluza-Klein reduction, where the 
stabilizer group of a point on the base does not act on the fiber.}  $SO(3,1)$, which implies that the modes in $\hs$
translate into higher-spin modes on  $\cM$. 
However in contrast to Vasiliev's higher spin gravity, the algebra generated by 
the $t^\mu$ is commutative (in the semi-classical limit under consideration), 
and  their nonabelian structure is encoded in their 
Poisson brackets $\{t^\mu,t^\nu\} = -\frac{1}{r^2 R^2} \theta^{\mu\nu}$ \eq{T-T-CR}.
As a vector space, the constraints including \eq{geometry-H-M} 
encode the degrees of freedom\footnote{The relation with e.g. $\hs(so(3,2))$ is more
visible using the Euclidean point of view on fuzzy $H^4_n$. The present Minkowski version 
coincides as a vector space but has modified covariance properties.} 
a $\hs$-algebra in 4 dimensions at each point on $\cM$;
for more details see e.g. section 4 in \cite{Sperling:2018xrm}.

There are hence 2 different points of view of the present structure.
One can take the 6-dimensional picture of $\C P^{1,2}$ as a 6-dimensional manifold,
which is a bundle over $\cM$.  
However from the physical point of view, it is better to view this structure 
in terms of $\hs$ -valued functions and tensors on $\cM$.
We will use both points of view in this paper as appropriate, but emphasize the  more 
physics-oriented 3+1-dimensional picture.

\paragraph{Poisson structure and Hamiltonian vector fields.}

As usual,
the Poisson structure on $\cC$ allows to associate to any $\L\in\cC$ a vector field on $\C P^{1,2}$, via
\begin{align}
 \{\L, \phi \} \equiv \cL_{\xi} \phi, 
       \qquad \L = \L^* \in \cC \ .
 \label{Hamiltonian-1}
\end{align}
This is the Lie derivative of $\phi \in \cC^\infty(\C P^{1,2})$ along
the ``Hamiltonian'' vector field $\xi = \{\L,.\}$, which defines a (one-parameter family of) 
diffeomorphism  on $\C P^{1,2}$, more precisely a symplectomorphism.
Restricted to $\cC^0\subset\cC$, this takes the familiar form
\begin{align}
 \d_\L \phi^{(0)} &= \cL_{\xi} \phi^{(0)} = \xi^\mu \del_\mu\phi ^{(0)}
 \label{gauge-vectorfield}
\end{align}
where
\begin{align}
 \xi &= \xi^\mu \del_\mu, \qquad  \xi^\mu = \{\L,x^\mu\}\ .
\end{align}
This is nothing but the push-forward  $\Pi_*\xi$  as in \eq{projection-hs}, and we will use the same symbol $\xi$
for both pictures. Clearly the components $\xi^\mu$  have in general components in different $\cC^s$ sectors,
which means that $\xi$ is viewed as $\hs$-valued vector field on $\cM$.

Similarly, we can also consider the push-forward of the 
Poisson structure or bivector field $\{.,.\}$ on $\C P^{1,2}$ as $\hs$-valued Poisson structure on $\cM$,
denoted as
\begin{align}
 \{\phi,\psi\}_\cM = \theta^{\mu\nu}\del_\mu\phi \del_\nu\psi \ .
 \label{Poisson-M-def}
\end{align}
Note that $\theta^{\mu\nu} = \{x^\mu,x^\nu\}$ takes indeed values in $\cC^1$, and any coordinates on $\cM$ can be used here.
This respects the Jacobi identity as long as it acts on $\cC^0$, but not in general since the push-forward kills vertical vector fields.
The push-forward basically means that $\hs$  (or the fiber coordinates) is considered as commutative.
We will see that replacing $\{.,.\}$ by $\{.,.\}_\cM$ is a good approximation 
for functions whose wavelengths are much shorter than cosmic scales and for low spin, 
since then the Jacobi identity holds to a very good approximation. 
This will be justified in the next sections.
Then $\xi = \{\L,.\}_\cM$ defines a $\hs$-valued Hamiltonian vector field on $\cM$, and we can drop the subscript
$\cM$ in the appropriate  regime.
No information is lost in this step, due to the following Lemma:

\begin{lem}

Hamiltonian vector fields $\{\L,.\}$ on $\cC$ are uniquely determined by their action on $\cC^0$, i.e. if
 $\{\L,x^\mu\} = 0 \ \forall \mu$ (in some open neighborhood in $\cM$), then $\{\L,.\} \equiv 0$
(in the open neighborhood in $\cM$).
\label{lemma:uniqueness-VF-C0}
\end{lem}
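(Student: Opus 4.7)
The plan is to reduce the statement to the generators of the Poisson algebra $\cC$ and then use the explicit algebraic relations \eqref{geometry-H-M} to propagate the vanishing on $\cC^0$ to the fiber coordinates $t^\a$. Since $\{\L,\cdot\}$ is a derivation of the commutative product, it is fully determined by its values on any set of (local) generators. In our case $\cC$ is generated over $\cC^0$ by the $t^\mu$, so it suffices to prove that $\{\L,t^\a\}=0$ on an open neighborhood where the hypothesis $\{\L,x^\mu\}=0$ holds.

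The key input is the contraction identity \eqref{x-theta-contract}, which can be solved for $t^\a$ as
\begin{align}
 t^\a \;=\; -\frac{1}{r^2 R^2 \sinh(\eta)}\, x_\mu\,\{x^\mu,x^\a\},
\end{align}
valid on any open neighborhood where $\sinh(\eta)\neq 0$ (the degeneracy at the big bounce $\eta=0$ occurs on a set of measure zero and is harmless, since the lemma is stated locally). Note that $\sinh(\eta)=x^4/R$ and $(x^4)^2=-R^2-x_\mu x^\mu$ by \eqref{x4-eta-def}, so $\sinh(\eta)\in\cC^0$ is a function of the $x^\mu$ alone.

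The proof then proceeds by applying $\{\L,\cdot\}$ to this expression. By hypothesis $\{\L,x^\mu\}=0$ for $\mu=0,\ldots,3$, and therefore also $\{\L,\sinh(\eta)\}=0$ since $\sinh(\eta)$ is a function of the $x^\mu$. The Jacobi identity gives
\begin{align}
 \{\L,\{x^\mu,x^\a\}\} \;=\; \{\{\L,x^\mu\},x^\a\} + \{x^\mu,\{\L,x^\a\}\} \;=\; 0 .
\end{align}
Leibniz applied to the displayed formula for $t^\a$ then yields $\{\L,t^\a\}=0$ throughout the neighborhood. Combined with $\{\L,x^\mu\}=0$ and the Leibniz rule, this forces $\{\L,\phi\}=0$ for every polynomial (hence every smooth function) in the generators $x^\mu,t^\mu$, which is exactly $\cC$ restricted to the neighborhood.

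The only conceptual subtlety is the degeneracy of \eqref{x-theta-contract} at $\sinh(\eta)=0$, but since the lemma is a local statement this is not an obstacle: any neighborhood intersecting $\{\eta=0\}$ contains an open subset on which $\sinh(\eta)\neq 0$, and the conclusion $\{\L,\cdot\}\equiv 0$ there extends by continuity to the closure. No other step is difficult, since everything else is straightforward use of Leibniz and Jacobi together with the algebraic constraint set \eqref{geometry-H-M}.
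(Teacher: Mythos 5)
Your proof is correct and follows essentially the same route as the paper's: the heart of both arguments is the Jacobi identity giving $\{\L,\{x^\mu,x^\nu\}\}=0$, combined with the fact that $x^\mu$ and $\theta^{\mu\nu}=\{x^\mu,x^\nu\}$ generate $\cC$, so that the derivation $\{\L,\cdot\}$ vanishes identically. The only difference is that you make the generation step explicit by solving \eqref{x-theta-contract} for $t^\a$ (which obliges you to handle the harmless degeneracy at $\sinh(\eta)=0$ by a density argument), whereas the paper simply asserts that $\theta^{\mu\nu}$ together with $x^\mu$ generate the full algebra.
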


\begin{proof}
This follows from the Jacobi identity, since
\begin{align}
 \{\L,\theta^{\mu\nu}\} = \{\L,\{x^\mu,x^\nu\}\} = -\{x^\mu,\{x^\nu,\L\}\}  - \{x^\nu,\{\L,x^\mu\}\}   = 0 \nn
\end{align}
and $\theta^{\mu\nu}$ together with $x^\mu$ generate the full algebra $\cC$.

\end{proof}
The following two sections provide a more detailed justification for a certain approximation,
which is used later to obtain the reduced 3+1-dimensional equations.
Although this is very important, one may  skip these sections at first reading
and jump  to section \ref{sec:gaugeinv-Lie}, where the higher-spin gauge invariance 
is discussed.

\subsection{Scales and orders of magnitude}

Using $SO(3,1)$ invariance, we can restrict ourselves to the ``reference point''
$p = (x^0,0,0,0)$ on $\cM$. Then \eq{theta-general} reduces to 
\begin{align}
  \theta^{0i} &\stackrel{p}{=}  \frac{r^2}{\cosh^2(\eta)} \sinh(\eta) x^0 t^i 
   \sim r^2 R t^i \nn\\
  \theta^{ij} &\stackrel{p}{=} \frac{r^2}{\cosh^2(\eta)} x^0 \epsilon^{0ijk} t_k 
  \sim \frac 1{\sinh(\eta)} r^2 R \epsilon^{ijk} t^k 
 \label{CR-explicit-ref}
\end{align}
so that $ \theta^{0i} \sim r^2 R t^i  \gg \theta^{ij} $  at late times $\eta \gg 1$,
indicated by $\sim$.
Then the first term in \eq{theta-general} dominates, and
\begin{align}
 \theta^{\mu\nu}
   \sim \frac{r^2}{\cosh(\eta)} (x^\mu t^\nu - x^\nu t^\mu) 
  \label{theta-approx} \,  .
\end{align}
Note that  at $p$, the $t^\mu$ generators satisfy 
\begin{align}
 \{t^i,t^j\} &\stackrel{p}{=} -\frac{1}{r^2 R^2} \theta^{ij} = -\frac 1{R\sinh(\eta)} \epsilon^{ijk} t^k 
 \label{t-t-CR}
\end{align}
and $t^0\stackrel{p}{=}  0$.
Hence in a sense we are considering  a $U(\msu(2))$-valued  Yang-Mills-type theory of local space-like translations, 
however the non-commutativity (via the Poisson bracket) leads to novel structures 
not usually encountered in classical Yang-Mills theory or gravity. 
Note that the explicit $\epsilon^{ijk}$ in \eq{t-t-CR} implies that 
parity is broken\footnote{I would like to thank the anonymous referee for pointing this out.}, and 
the sign  would be flipped
for the doubleton representation $\cH'_{n}$ with opposite chirality. 
This is somewhat reminiscent of chiral higher spin theory \cite{Ponomarev:2016lrm}.

\paragraph{Orders of magnitude and asymptotic regime.}

Now consider the ``size'' of the various generators. The relations \eq{geometry-H-M} lead to 
the following scale estimates
\begin{align}
 |t|   &\sim r^{-1} \cosh(\eta)  \nn\\
 |\theta^{\mu\nu}|  &\sim R r \cosh(\eta) =: L_{\rm NC}^2 \ 
 \label{L-NC-def}
\end{align} 
defining the scale of noncommutativity $L_{\rm NC}$.
This gives for $\phi^{(1)} = \phi_\a(x) t^\a \ \in \cC^1$
\begin{align}
 \{t^\mu,\phi^{(1)}\} &= \{t^\mu,\phi_\a(x) \}t^\a + \phi_\a\{t^\mu, t^\a\} \
  \sim \sinh(\eta)\del\phi_\a t^\a -  \frac{1}{r^2R^2}\phi_\a \theta^{\mu\a} \nn\\
  &=  \frac 1{r R} O\Big(\sinh(\eta)\big(\sinh(\eta) R\del\phi  +  \phi\big)\Big) \nn\\
  &\sim \{t^\mu,\phi(x)_\a \}t^\a \ \big(1+O(\frac{1}{|x|\del})\big)
\end{align}
where $|x|= R\cosh(\eta)$ measures the cosmic time, and $\sim$ indicates $\eta \gg 1$. 
Hence  the derivative term dominates except for extremely low wavelengths $\l$, i.e. in the  {\bf asymptotic regime}
\begin{align}
 L_{\rm NC} \ll \l \ll R\cosh(\eta)
 \label{asymptotic-scale}
\end{align}
much shorter than the cosmic scale  and  longer than the scale of noncommutativity.
Similarly,
\begin{align}
 \{x^\mu,\phi^{(1)}\} &= \{x^\mu,\phi_\a \}t^\a + \phi_\a\{x^\mu, t^\a\} \
  = \theta^{\mu\nu}t^\a \del_\nu\phi_\a  + \sinh(\eta)\phi_\mu \nn\\
  &\sim \sinh(\eta)\big(\sinh(\eta) R\del\phi  + \phi\big) \nn\\
 &\sim \{x^\mu,\phi(x)_\a\} t^\a  \ \big(1+O(\frac{1}{|x|\del})\big) \ .
\end{align}
Note that the estimate applies equally to the component in $\cC^0$ and $\cC^2$, since 
$\theta^{\mu\nu}t^\a$ and $[\theta^{\mu\nu}t^\a]_0$
are comparable in size.
Hence 
in general we can write 
\begin{align}
\{\phi,\psi\}  &\sim \{\phi_{\und\a} ,\psi_{\und\b}\} t^{\und\a} t^{\und\b} \ ,
\label{generic-commutator-estimate}
\end{align}
where $\und\a$ is a multi-index. 
This amounts precisely to the projection  to $\cM$ \eq{Poisson-M-def}
as discussed above, which will be very helpful to extract the leading contributions for physics. 
Recalling  the cosmic FLRW scale parameter \eq{cosm-scale}, the ratio of IR and UV scales is
\begin{align}
 \frac{a(t)^2}{L^2_{NC}} \sim  \frac{R\cosh^2(\eta)}{r} \ \to \ \infty
\end{align}
at late times.
Hence space grows indeed much faster than the NC scale, as it should.

\subsection{Poisson bracket and reduction to $\cM$}
\label{sec:poisson-red}

To make the reduction to $\cM$ precise, we derive an explicit
formula for the Poisson structure with the form
\begin{align}
 \{f,g\} &=  \{f,x^\mu\}M_\mu^\nu\{t_\nu,g\} - \{g,x^\mu\}M_\mu^\nu\{t_\nu,f\} \nn\\
  & + \{f,x^\mu\}B_{\mu\nu}\{x_\nu,g\} + \{g,t_\mu\}C^{\mu\nu}\{t_\nu,f\} 
 \label{poissonbracket-Ansatz-coords}
\end{align}
where $B$ and $C$ are antisymmetric. 
This is not unique due to the constraints
\begin{align}
 0 &=  x^\mu\{t_\mu,.\} + t_\mu\{x^\mu,.\}  \nn\\
 0 &= R^{-2} x^\mu\{x_\mu,.\} + r^2 t_\mu\{t^\mu,.\} \ .
 \label{constraints-xt}
\end{align}
An explicit realization is given in  appendix \ref{sec:poisson-deriv-formulas}, which 
leads to the following formula 
\begin{align}
\boxed{\ 
 \{f,g\} = (\cD^\mu f) \{t_\mu,g\} + (\eth_\mu f) \{x_\mu,g\}
 \ } 
\label{poisson-bracket-realiz}
\end{align}
where $\cD$ and $\eth$ are derivations\footnote{Note that the present $\eth$ is
different from the one used in \cite{Sperling:2018xrm} for $H^4_N$.} on $\cC$ given by
\begin{align}
 \eth^\mu x^\r &=  \eta^{\r\nu}, 
  &\eth^\mu t^\r  &= - \frac{1}{R^2\cosh^{2}(\eta)}(t^\r x^\mu - x^\r t^\mu) \nn\\ 
 \cD^\mu t^\r &= P_\perp^{\r\mu} - r^2 \cosh^{-2}(\eta) t^\r t^\mu,    
 &\cD^\mu x^\r &=0 \ .
 \label{derivations-coords}
\end{align}
In particular, $\cD\cC^0=0$ means that its push-forward to $\cM$ vanishes, 
$\Pi_*\cD=0$. 
Thus $\cD$ are ``vertical`` derivatives along  the local $S^2$ fiber.
Moreover, we claim that  $\eth = \Pi_*\eth$ vanishes on the local $S^2$ fiber.
To see that, we  compute  
\begin{align}
 \eth^0 t^i  &= - \frac{1}{R\cosh(\eta)} t^i, \qquad \eth^j t^i = 0  \ 
\end{align}
at the reference point, and note that 
the first term reproduces the grow of 
$|t|^2 = r^{-2} \cosh^2(\eta) = -(rR)^{-2} |x|^2$ in time. 
Hence the push-forward of a Hamiltonian vector field to $\cM$ is given by
\begin{align}
\boxed{\ 
 \{g,.\}_\cM \equiv \Pi_*\{g,.\} = \{g,x^\mu\}\eth_\mu 
\ }
\end{align}
and similarly  the push-forward of the Poisson bracket to $\cM$ is given by
\begin{align}
\boxed{\ 
 \{f,g\}_\cM = \theta^{\mu\nu} \eth_\mu f \ \eth_\nu g
 \ }
  \label{reduced-bracket}
\end{align}
which is verified immediately for $f,g \in \cC^0$. 
%
For late times $\eta\gg 1$, we can replace $\eth$ by the simpler derivative
\begin{align}
 \del_\mu := \sinh^{-1}(\eta)\{t_\mu,.\}
 \label{del-def-basic}
\end{align}
since 
\begin{align}
 \del_\mu x^\nu &= \d^\nu_\mu, \nn\\
  \del_\mu t^\nu &= -\frac{1}{r^2 R^2}\sinh^{-1}(\eta)\theta^{\mu\nu}
   \sim -\frac{1}{R^2}\frac{1}{\cosh^2(\eta)} 
    (x^\mu t^\nu - x^\nu t^\mu)  
\end{align}
using \eq{theta-approx}, which 
agrees with $\eth$ at late times. We can then replace \eq{reduced-bracket} by 
\begin{align}
 \{f,g\}_\cM &\sim (\del_\mu f) \{x_\mu,g\} \ .
 \label{poisson-approx-x-1}
\end{align}
This will be used throughout this paper, and we discuss again its range of validity:

\paragraph{Asymptotic regime.}

The fact that $\cD$ vanishes on $\cC^0$ is very helpful to organize the
higher-spin theory: it leads to the estimate
\begin{align}
 |\cD f^{(s)}| \leq \frac s{|t|}|f| = \frac{r s}{\cosh(\eta)} |f|
\end{align}
for $f=f^{(s)}\in\cC^s$, where $|f|$ denotes the maximal value of $f$ on the internal $S^2$ 
over the particular point on $\cM$.
Thus
\begin{align}
 (\cD f)\{t_\mu,g\} &=   r  O(f \del g)  \nn\\
 (\del_\mu f) \{x_\mu,g\}  &=  r O(x\cdot\del f\del g)
\end{align} 
 for small spin and $\eta\gg 1$, using the abbreviation $x\cdot\del \equiv R\cosh(\eta)\del$. Thus
\begin{align}
 \{g,f\} &=  \{g,t_\mu\} (\cD^\mu f) +  \{g,x_\mu\}(\del_\mu f)  \nn\\
 %
 &=  \{g,x_\mu\}(\del_\mu f)\big(1 + O(\frac{f}{x\cdot\del f})\big)
  \label{poisson-approx-x-explicit}
\end{align}
so that we can use the 
approximation 
\begin{align}
\boxed{\ 
 \{g,f\} \sim  \{g,x^\mu\} (\del_\mu f)
 \ }
 \label{poisson-approx-x}
\end{align}
in the asymptotic regime, i.e. $x\cdot\del f \gg f$. 
If both $f$ and $g$ are in the asymptotic regime \eq{asymptotic-scale}, we can write 
\begin{align}
\boxed{\
 \{f,g\} \sim \theta^{\mu\nu} \del_\mu g \del_\nu f \sim \{f,g\}_\cM
 \ }
 \label{generic-commutator-estimate-2}
\end{align}
in agreement with \eq{generic-commutator-estimate}.
Of course we are free to use any coordinates in $\cC^0$ here.
Thus we have replaced the Poisson structure  $\C P^{1,2}$ by a 
$\hs$ valued bi-vector field on $\cM$,
which is its push-forward  
by the bundle projection as discussed in section \ref{sec:bundle-hs}.
This reduced  Poisson tensor will satisfy the following reduced
Jacobi identity 
\begin{align}
 \{\{f,g\}_\cM,h\}_\cM  +   \{\{g,h\}_\cM,f \}_\cM   + \{\{h,f\}_\cM,g \}_\cM \  \sim 0
\end{align}
in the asymptotic regime,
as long as $x\cdot \del \gg 1$.
This can also be seen using the exact identity 
\begin{align}
 0 = \{\{x^\mu,x^\nu\},x^\s\} + \{\{x^\nu,x^\s\},x^\mu \}  
    + \{\{x^\s,x^\mu\},x^\nu \} \
\end{align}
upon neglecting derivatives in the internal directions.
This in turn implies (cf. Appendix A in \cite{Steinacker:2010rh})
\begin{align}
 \del_\mu (\r_M \theta^{\mu\nu}) \sim 0, \qquad \r_M = \sqrt{|\theta^{\mu\nu}|}^{-1}
 \label{Poisson-conservation}
\end{align}
with the same qualifications. In Cartesian coordinates, 
this is simply the statement that $\theta^{\mu\nu} \sim const$
in the asymptotic regime. It would be desirable to find an exact 4-dimensional form of these 
relations.

\section{Higher-spin  gauge invariance and Lie derivatives}
\label{sec:gaugeinv-Lie}

\subsection{Gauge transformation of scalar fields}

Consider some scalar field  $\phi \in \cC$ on $\C P^{1,2}$, which transforms under
gauge transformations  as\footnote{This  
arises from the matrix model gauge transformation $\phi \to U \phi U^{-1}$ for $\phi \in \End(\cH)$. 
For $U = e^{i\L}$, its
infinitesimal version is $\phi \to -i[\L,\phi] \sim \{\L,\phi\}$ for $\L = \L^\dagger \in \End(\cH)$.}
\begin{align}
 \d_\L\phi &= \{\L, \phi \} = \cL_{\xi} \phi, 
       \qquad \L = \L^* \in \cC^\infty(\C P^{1,2}) \ .
 \label{gaugetrafo-1}
\end{align}
This is nothing but  the Lie derivative of $\phi \in \cC^\infty(\C P^{1,2})$ along
the Hamiltonian vector field $\xi = \{\L,.\}$.
For functions on $\cM\ $ i.e. restricted to $\phi \in \cC^0$, this reduces to
\begin{align}
 \d_\L \phi &= \cL_{\xi} \phi = \xi^\mu \del_\mu\phi
 \label{gauge-vectorfield-2}
\end{align}
which is interpreted as Lie derivative along the $\hs$-valued vector field 
\begin{align}
 \xi &= \xi^\mu \del_\mu, \qquad  \xi^\mu = \{\L,x^\mu\}\ .
 \label{diffeo-Lambda}
\end{align}
on $\cM$.
E.g. for $\L\in\cC^1$, its components are
\begin{align}
  \xi^\mu = \xi^\mu_{(0)} + \xi^\mu_{(2)} \quad \in \cC^0 \oplus \cC^2
\end{align}
and we can view  $\xi^\mu_{(0)} = \{\L,x^\mu\}_0 \in\cC^0$ as  vector field on  $\cM$ which
 defines a diffeomorphism 
 \begin{align}
 [\d_\L \phi]_0 &= \cL_{\xi_{(0)}} \phi .
\end{align}
However one typically cannot get rid of the component 
$\xi_{(2)} = \{\L,x^\mu\}_2  \in \cC^2$, and   
$\xi$ should be considered as generator of generalized $\hs$-valued diffeomorphisms.

\paragraph{$\hs$-valued scalar fields on $\cM$.}

In the asymptotic regime \eq{asymptotic-scale},  the above formula generalizes to arbitrary $\phi\in\cC$ as follows:
\begin{align}
 \d_\L\phi &= \{\L,\phi\} \sim  \{\L,x^\mu\} \del_\mu \phi  \nn\\
  & =  \xi^\mu  \del_\mu \phi  =: \bar\cL_\xi \phi \ .
  \label{scalar-Lie-full}
\end{align}
This is interpreted as Lie derivative of a $\hs$-valued function $\phi$
along the $\hs$-valued vector field $\xi$, indicated by $\bar\cL$. This makes sense as long as 
 $\phi$ and $\xi$ are in the asymptotic regime.

\subsection{Gauge transformation of vector fields}

Now consider some given Hamiltonian vector field $E=\{Z,.\}$  on $\C P^{1,2}$, such as the frame discussed below.
Assume that $Z$ transforms under a gauge transformation \eq{gaugetrafo-1} as 
\begin{align}
 \d_\L Z &= \{\L, Z\} \ .
\end{align}
Then the associated vector field $E$ transforms as
\begin{align}
 (\d_\L E) \phi  &= \{\{\L,Z\},\phi \}
  =  \{\L,\{ Z,\phi \}\} - \{ Z,\{\L,\phi \}\} \nn\\
   &= (\cL_\xi E)\phi 
  \label{Liederivative-id-1}
\end{align}
for any $\phi\in\cC$, where $\xi = \{\L,.\}$. 
This is  precisely the Lie derivative of the vector field $E$ along $\xi$, hence
\begin{align}
 \d_\L E = \cL_\xi E \ .
 \label{vectorfield-Lie}
\end{align}
This will give the transformation of the frame in \eq{frame-CP}, and it
extends to  higher-rank tensor fields as 
\begin{align}
 \d_\L (E \otimes E') = (\cL_\xi E)\otimes E' + E \otimes \cL_\xi E'
  = \cL_\xi(E \otimes E') \ 
  \label{tensor-trafo}
\end{align}
which will apply to the metric tensor.
We also recall that the Lie derivative of the Poisson bi-vector field along any Hamiltonian vector field
$\xi$ on $\C P^{1,2}$ vanishes $\cL_\xi\{.,.\} = 0$ , due to the Jacobi identity.

\paragraph{Gauge transformations and $\hs$-valued Lie derivative  on $\cM$.}

Now consider the same  from the 4-dimensional point of view.
As discussed in section \ref{sec:bundle-hs}, the push-forward of the Hamiltonian  vector field 
$E=\{Z,.\}$ to $\cM$ for $Z\in\cC$
defines a $\hs$-valued vector field  $E^\mu \del_\mu$ on  $\cM$, with components
\begin{align}
 E^\mu := \{Z,x^\mu\}  \ .
 \label{rank-1-field-def}
\end{align}
Under gauge transformations $\d_\L Z = \{\L,Z\}$, we can still use the first line in \eq{Liederivative-id-1},
\begin{align}
 (\d_\L E) \phi  &= \{\L,\{Z,\phi \}\} - \{Z,\{\L,\phi \}\} \ .
 \end{align} 
At this point it is more transparent to use coordinates $x^\mu$ on $\cM$, and the above becomes
\begin{align}
 \d_\L E^\mu &=   \{\L,E^\mu\} - \{Z,\xi^\mu\} \nn\\
 &\sim  \xi^\r \del_\r E^\mu  -  \{Z,x^\r\} \del_\r \xi^\mu \nn\\
 &=  \xi^\r \del_\r E^\mu -  E^\r \del_\r \xi^\mu \nn\\
 &=: \bar\cL_\xi E^\mu 
 \label{gauge-VF-M}
\end{align}
(for $\phi = x^\mu$), using the approximation \eq{poisson-approx-x} in the second line in the asymptotic regime.
This has the standard form of a Lie derivative along a field 
$\xi^\mu = \{\L,x^\mu\}$, and  constitutes our ``working definition'' of a $\hs$-valued Lie derivative $\bar\cL$ on $\cM$, 
generalizing \eq{scalar-Lie-full}.
Remember that  both  $E^\mu$
and $\xi^\mu$ have $\hs$ components.

%

The above considerations provide a 4-dimensional geometrical interpretation of 
the emergent diffeomorphism invariance and its higher spin extension, 
which arise\footnote{This generalizes analogous results in \cite{Langmann:2001yr} in a dimensionally reduced setting.} from
Hamiltonian vector fields on $\C P^{1,2}$.
Since these preserve the symplectic volume on $\C P^{1,2}$, the resulting $\hs$ diffeomorphisms on $\cM$
should be volume-preserving in some sense
also from the 4-dimensional point of view. 
This is indeed the case, as elaborated in Appendix \ref{sec:vol-pres-diffeo}.

\section{Kinetic action, frame and effective metric on $\cM$}
\label{sec:action-frame}

Now we want to understand the effective metric in the matrix model in the non-linear regime.
To keep the discussion simple, we focus on  the kinetic term 
for a scalar field $\phi$ in the matrix model,
\begin{align}
  S[\phi] &=  \Tr [Z^{{\dot\a}},\phi][Z_{{\dot\a}},\phi] = - \Tr \phi \Box \phi, 
  \qquad \dot\a = 0,...,3
  \label{scalar-action}
\end{align}
on a background given by some solution $Z_{{\dot\a}}$ of \eq{matrix-eom}.
Such scalar fields arise as transversal
brane fluctuation $\phi \equiv Z_b, \ b=4,...,9$.
The effective metric is encoded in the d'Alembertian $\Box$, and will therefore govern 
all fluctuations in the model.
We  study the above action in the semi-classical limit from
two different points of view.

\paragraph{$\C P^{1,2}$ point of view.}

Consider  a function $\phi$  on $\C P^{1,2}$. The background  $Z_{{\dot\a}}$ defines a  frame
\begin{align}
 E_{{\dot\a}} [\phi] &:= \{Z_{{\dot\a}},\phi\}  , \qquad \dot\a = 0,...,3
 \label{frame-CP}
\end{align} 
which are derivations on $\cC = \cC^\infty(\C P^{1,2})$.
Frame indices will always be dotted Greek letters ${\dot\a}, \dot\b,...$,
which transform under the global $SO(3,1)$.
This defines a metric as a bi-vector field 
\begin{align}
 \g[\phi,\psi] &:= \eta^{{{\dot\a}}{\dot\b}} E_{{\dot\a}}[\phi] E_{\dot\b}[\psi]
\end{align}
and the action \eq{scalar-action} can be written in the semi-classical limit as\footnote{The 6-dimensional action 
could be rewritten in 
covariant form using an effective metric along the lines of \cite{Steinacker:2010rh}, but
we refrain from doing so because we want to emphasize the 4-dimensional
point of view.} 
\begin{align}
  S[\phi] &=  \Tr [Z^{{\dot\a}},\phi][Z_{{\dot\a}},\phi] 
  \ \sim \ - \int_{\C P^{1,2}} \omega^{\wedge 3}\,\{Z^{{\dot\a}},\phi\}\{Z_{{\dot\a}},\phi\}  \nn\\
   &= \ -\int_{\C P^{1,2}} \dd^6 z\,\frac 1{\sqrt{|\theta^{AB}|}}  \, \g^{AB}\del_A \phi \del_B \phi \ 
 \end{align}
 where e.g. $y^A = (x^\mu,\vartheta,\varphi)$ are coordinates on $\C P^{1,2}$.
 Here the coordinate form of the frame is 
\begin{align}
 E_{{\dot\a}}^A &:= \{Z_{{\dot\a}},y^A\}, 
 \qquad \g^{AB} = \eta^{{{\dot\a}}{\dot\b}} E^A_{{\dot\a}} E^B_{\dot\b}
\end{align}
such that
$E_{{\dot\a}}[\phi] = E_{{\dot\a}}^A \del_A \phi\ $ for $\phi\in\cC$,
 and the $SO(4,2)$-invariant symplectic volume form on $\C P^{1,2}$ is
\begin{align}
 \omega^{\wedge 3} &= \Omega_{A_1 ... A_6} dy^{A_1} ... dy^{A_6} 
  = \frac 1{\sqrt{|\theta^{AB}|}} dy^{1} ... dy^{6} \ .
 \label{6-volume-form}
\end{align} 
However, this does not properly reflect the structure of $\C P^{1,2}$ as $S^2$ bundle over $\cM$,
and we will rewrite it in a way which is more transparent from the 4-dimensional point of view.

\paragraph{3+1-dimensional point of view on $\cM$.}

Now  consider the reduction (more precisely the push-forward) of the same configuration to $\cM$.
This applies automatically if $\phi$ is in the asymptotic regime.
Then 
 \begin{align}
  E_{{\dot\a}}[\phi] \sim \tensor{E}{_{\dot\a}^\mu} \del_\mu \phi, \qquad \tensor{E}{_{\dot\a}^\mu} 
     &:= \{Z_{{\dot\a}},x^\mu\}
  \label{frame-M}
 \end{align}
 is a $\hs$-valued frame on $\cM$, which is an invertible $\cC$-valued $4\times 4$ matrix.
 Instead of the Cartesian coordinates $x^\mu$ we could use any other coordinates on $\cM$.
 We can write the volume form \eq{6-volume-form}  as 
 \begin{align}
 \omega^{\wedge 3} &= \r_M dx^{0} ... dx^{3} \Omega_2, \qquad \r_M = \sinh^{-1}(\eta)
 \label{4-volume-form}
\end{align}
 where $\Omega_2$ is the volume form of the unit 2-sphere corresponding to the local fiber,
 and $\r_M$ is the corresponding 4-density.
 The explicit form is easily obtained as the unique $SO(4,1)$-invariant volume on $H^4$, 
 cf. \cite{Steinacker:2017bhb}.
Then the  action reduces to  
  \begin{align}
  S[\phi] &\sim \ -\int _{\cM^{3,1}} \!\!  d x_0 \ldots d x_3 \, \rho_M \,  \g^{\mu\nu}
   \del_\mu \phi \del_\nu \phi \ 
   = \ -\int_{\cM^{3,1}}\!\! \dd^4 x\, \sqrt{|G_{\mu\nu}|}\,
  G^{\mu\nu}\del_\mu \varphi \del_\nu \varphi \ 
  \label{scalar-action-G}
\end{align} 
absorbing
some dimensionful constants  in $\varphi \sim \phi$. Here
\begin{align}
 \g^{\mu\nu} &= \eta^{{{\dot\a}}{\dot\b}} \tensor{E}{_{\dot\a}^\mu} \tensor{E}{_{\dot\b}^\nu}, \qquad 
 G^{\mu\nu} := \frac 1{\r^2} \g^{\mu\nu}
 \label{eff-metric}
\end{align}
are an auxiliary  and the effective ($\hs$-valued) metric on $\cM$, respectively, 
and
\begin{align}
\r^2 &= \r_M \sqrt{|\g^{\mu\nu}|}  \ 
 \label{rho-def}
\end{align}
is a scalar field (rather than a density).
One can also introduce a  densitized vielbein  
which directly gives the effective metric,
\begin{align}
 \tensor{\cE}{_{\dot\a}^\mu} = \r^{-1} \tensor{E}{_{\dot\a}^\mu},
 \qquad G^{\mu\nu} &= \eta^{{{\dot\a}}{\dot\b}} \tensor{\cE}{_{\dot\a}^\mu} \tensor{\cE}{_{\dot\b}^\nu} \ .
 \label{rescaled-frame-def}
\end{align}
The last form of \eq{scalar-action-G} is indeed $\hs$ covariant,
in the sense discussed  below.
This is  the dominant contribution for fields 
$\phi\in\cC^s$ in the asymptotic regime.
The metric $G^{\mu\nu}$ indeed governs the d'Alembertian \eq{Box-def},
\begin{align}
 \Box &= - \{Z^{{\dot\a}},\{Z_{{\dot\a}},.\}\} = \r^2 \Box_{G}
\end{align}
as shown in Lemma \eq{lem:Box}.
At the linearized level and for scalar fields $\phi\in\cC^0$, only the 
scalar component 
$[G^{\mu\nu}]_0  \ \in\cC^0$ contributes.

For the background solution $\bar Z_{{\dot\a}} = T_{{\dot\a}} = R^{-1} \cM_{{\dot\a}4}$ \eq{T-solution}, 
we obtain in  Cartesian coordinates 
\begin{align}
  \bar E_{{\dot\a}}^\mu &= \sinh(\eta)\d_{{\dot\a}}^\mu  \ , &  \bar\g^{\mu\nu} &= \sinh^2(\eta) \eta^{\mu\nu}, \nn\\
  \bar G^{\mu\nu} &=  \sinh^{-3}(\eta)\bar\g^{\mu\nu} = \sinh^{-1}(\eta) \eta^{\mu\nu} , &
  \bar\r^{2} &= \sinh^3(\eta) \ .
   \label{eff-metric-G-bar}
\end{align}

\paragraph{Gauge transformation of the frame.}

Consider first the  $\C P^{1,2}$ point of view.
Under a gauge transformation \eq{gaugetrafo-1}, the background transforms  as 
\begin{align}
 \d_\L Z^{{\dot\a}} &= \{\L, Z^{{\dot\a}} \}
\end{align}
and the associated vector fields $E_{{\dot\a}}$ on $\C P^{1,2}$  transform as
\begin{align}
 \d_\L E_{{\dot\a}} = \cL_\xi E_{{\dot\a}} \ 
\end{align}
using \eq{vectorfield-Lie},  where $\xi = \{\L,.\}$. 
In local coordinates $y^A$, this gives the coordinate expression
\begin{align}
 \d_\L E_{{\dot\a}}^A 
   = \xi^B\del_B E_{{\dot\a}}^A - E_{{\dot\a}}^B \del_B \xi^A \
  = \cL_\xi  E_{{\dot\a}}^A
\end{align}
and we  obtain
\begin{align}
 \d_\L \g^{AB} &= \d_\L(E_{{\dot\a}}^A E_{\dot\b}^B \eta^{{{\dot\a}}{\dot\b}}) = \cL_\xi \g^{AB} \ .
\end{align}

\paragraph{Reduction to $\cM$.}

Now consider the gauge transformations of the frame for $\hs$-valued functions on $\cM$,
in the asymptotic regime where all fields 
are in the asymptotic regime \eq{asymptotic-scale}.
Then we can use \eq{gauge-VF-M}, 
\begin{align}
 \d_\L E_{{\dot\a}}^\mu 
  \sim \bar\cL_\xi \tensor{E}{_{\dot\a}^\mu}
  \label{gaugetrafo-frame}
\end{align}
where $\xi^\nu = \{\L,x^\nu\}$ is a $\hs$-valued vector field,  
and $E_{{\dot\a}}^\mu$ is the $\hs$-valued frame \eq{frame-M} on $\cM$.
In particular, this implies 
\begin{align}
 \d_\L \g^{\mu\nu} \sim \bar\cL_\xi \g^{\mu\nu}
 = \nabla_{(\g)}^\mu\xi^\nu + \nabla_{(\g)}^\nu \xi^\mu \ 
\end{align}
where $\nabla_{(\g)}$ is  the Levi-Civita connection corresponding to $\g^{\mu\nu}$.
We would like to generalize this to the effective metric $G^{\mu\nu}$ 
including the conformal factor. To see this,
observe that the invariance of the symplectic form  $\cL_\xi\omega = 0$ implies 
\begin{align}
 \d_\L \r^4 \sim \bar\cL_\xi \r^4
 \label{rho-invar-M}
\end{align}
for 
\begin{align}
 \r^4 &:= \g^{\mu_0\nu_0}...\g^{\mu_3 \nu_3}  \Omega_{\mu_0 ... \mu_3}   \Omega_{\nu_0 ... \nu_3}
  = |\g^{\mu\nu}| |\omega_{\mu\nu}| = |\g^{\mu\nu}|\r_M^2 \ ,
\end{align}
where 
\begin{align}
\Omega_{\mu_0 ... \mu_3} &:= (\omega^{\wedge 2})_{\mu_0 ... \mu_3}, 
\qquad 
 \Omega_{\mu_0...\mu_3}\varepsilon^{\mu_0..\mu_3} =  \sqrt{|\omega_{\mu\nu}|} = \r_M
\end{align}
is the effective volume form \eq{4-volume-form} on $\cM$.
Thus $\r^2$ coincides with the conformal factor for the effective metric \eq{eff-metric},
and we obtain 
\begin{align}
 \d_\L G^{\mu\nu} &\sim \bar\cL_\xi G^{\mu\nu} \sim \nabla_{(G)}^\mu\xi^\nu + \nabla_{(G)}^\nu \xi^\mu
 \label{effective-G-gaugetrafo}
\end{align}
for the effective metric in the asymptotic regime. Here
$\nabla_{(G)}$ is  the Levi-Civita connection corresponding to $G^{\mu\nu}$.

To summarize, the  gauge invariance arising from symplectomorphisms on $\C P^{1,2}$
leads to an emergent higher-spin symmetry  in 4 dimensions. This is achieved by considering 
$\tensor{E}{_{\dot\a}^\mu}, \ G^{\mu\nu}$  and  $\xi^\mu$ as $\hs$-valued vector fields on $\cM^{3,1}$, and 
\eq{effective-G-gaugetrafo} should be understood in this higher spin sense.
In the linearized regime for  $s=1$, this reduces to 
the standard formulas for volume-preserving diffeomorphisms in 4 dimensions
\cite{Steinacker:2019dii}.

\section{Geometric description of the non-linear regime}

In this section, we develop a geometric formalism based on a higher-spin generalization of the Weitzenb\"ock 
connection and torsion. We will work in the asymptotic regime \eq{asymptotic-scale}, using the 4-dimensional 
 point of view developed above.

\subsection{Weitzenb\"ock connection and torsion}

The fundamental degrees of freedom of the matrix model is the 
background $Z_{{\dot\a}}$ and its associated 
vielbein $E_{{\dot\a}} = \{Z_{{\dot\a}},.\}$. 
It is then natural to 
define a connection which respects the vielbein,
\begin{align}
 0 = \nabla_{\dot\g} \tensor{E}{_{\dot\a}^\mu} &= E_{\dot\g}[\tensor{E}{_{\dot\a}^\mu}]  
 + \tensor{\Gamma}{_{\dot \g}_\r^\mu} \tensor{E}{_{\dot\a}^\r} 
\end{align}
analogous to the Weitzenb\"ock connection \cite{aldrovandi2012teleparallel}, along $E_{\dot\g}$.
This can always be solved as
\begin{align}
\boxed{
 \tensor{\Gamma}{_{\dot\g}_{\dot\a}^\mu} := - E_{\dot\g}[\tensor{E}{_{\dot\a}^\mu}] \ 
 = \tensor{\Gamma}{_{\dot \g}_\r^\mu}\tensor{E}{_{\dot\a}^\r} \quad \in \cC \
}
\label{Weitzenbock-explicit}
\end{align}
provided $E^{\r}_{{\dot\a}}$ is an invertible matrix taking values in $\cC$.
In the perturbative regime, the most significant contribution should be the $\cC^0$ components of 
$\tensor{E}{_{\dot\a}^\mu}$ and $\tensor{\Gamma}{_{\dot \g}_\r^\mu}$, accompanied by 
some  higher-spin contributions.
In local coordinates $y^\mu$, this is
\begin{align}
 \del_\nu\tensor{E}{_{\dot\a}^\mu} \ 
 &= -\tensor{\Gamma}{_{\nu}_\r^\mu}\tensor{E}{_{\dot\a}^\r} \ 
  \label{weizenbock-gamma-expl}
\end{align}
where we define the derivation\footnote{for the background $Z_{\dot\a} = t_{\dot\a}$,
the $\del_\nu$ agrees with $\eth_\nu$ and \eq{del-def-basic} on $\cC^0$, and 
possible differences on $\hs$ are negligible in the asymptotic regime.}
\begin{align}
 \del_\nu &:= \tensor{E}{^{\dot\a}_\nu}\{Z_{{\dot\a}},.\}, \qquad
 \del_\nu y^\mu = \d^\mu_\nu \ .
\end{align}
The inverse vierbein is defined as usual
\begin{align}
   \tensor{E}{^{\dot\a}_\mu}\tensor{E}{_{\dot\b}^\mu} &= \d^{\dot\a}_{\dot\b}, &  
    \g^{\mu\nu} &= \eta^{{{\dot\a}}{\dot\b}} \tensor{E}{_{\dot\a}^\mu} \tensor{E}{_{\dot\b}^\nu},  \nn\\
    \tensor{E}{^{\dot\a}_\nu} \tensor{E}{_{\dot\a}^\mu} &= \d^\mu_\nu,  & 
  \eta_{{{\dot\a}}{\dot\b}} &= \tensor{E}{_{\dot\a}^\mu} \tensor{E}{_{\dot\b}^\nu}  \g_{\mu\nu} \ .
\end{align}
This connection is automatically compatible
with the metric $\g^{\mu\nu}$, 
\begin{align}
 \nabla \g^{\mu\nu} = \eta^{{{\dot\a}}{\dot\b}} (\nabla \tensor{E}{_{\dot\a}^\mu} \tensor{E}{_{\dot\b}^\nu}
      +  \tensor{E}{_{\dot\a}^\mu}\nabla \tensor{E}{_{\dot\b}^\nu}) = 0 \ .
\end{align}
For any $\hs$-valued vector field $V^\mu$ on $\cM$, we can then define
the  covariant derivative as
\begin{align}
 \nabla_\mu V^\nu &=  \del_\mu V^\nu + \tensor{\Gamma}{_\mu_\r^\nu} V^\r  \ .
\end{align}
This connection is flat\footnote{The curvature on $\cM$ is defined  
as usual by
$\cR_{{\dot\a},\dot\b}[E_{\dot\g}] :=[\nabla_{{\dot\a}},\nabla_{\dot\b}]E_{\dot\g} - \nabla_{[E_{{\dot\a}},E_{\dot\b}]} E_{\dot\g} = 0$
{\em in the asymptotic regime}, where $[E_{{\dot\a}},E_{\dot\b}]$ 
is a linear combination of the $E_{\dot\g}$. However, there is no fully noncommutative version.} 
since the frame is parallel, $\nabla E_{\dot\b} = 0$.
However it typically has torsion,
\begin{align}
 T[X,Y] = \nabla_X Y - \nabla_Y X - [X,Y]
\end{align}
which for the frame  can be computed as
\begin{align}
 T_{{\dot\g}{\dot\b}} &\equiv T[ E_{\dot\g},E_{\dot\b}] = \nabla_{\dot\g} E_{\dot\b} - \nabla_{\dot\b}  E_{\dot\g} 
   - [E_{\dot\g}, E_{\dot\b}] \nn\\
  &= - [E_{\dot\g},E_{\dot\b}] \equiv -\{Z_{\dot\g},\{Z_{\dot\b},.\} \} + \{Z_{\dot\b},\{Z_{\dot\g},.\} \}  \nn\\
  &= \{\hat\Theta_{{\dot\g}{\dot\b}},.\} \ , \nn\\
  \tensor{T}{_{\dot\a}_{\dot\b}^\mu} &:= \{\hat\Theta_{{{\dot\a}}{\dot\b}},y^\mu\} , 
  \qquad \hat\Theta_{{{\dot\a}}{\dot\b}} :=  -\{Z_{{\dot\a}},Z_{{\dot\b}}\}  
 \label{torsion-explicit}
\end{align}
using the Jacobi identity. 
Due to Lemma \ref{lemma:uniqueness-VF-C0}, $\tensor{T}{_{\dot\a}_{\dot\b}^\mu}$ fully captures 
the noncommutative field strength
$\hat\Theta_{{{\dot\a}}{\dot\b}}$.
Thus torsion encodes the quantum structure of space-time, and it is the semi-classical
shadow of it. 
This is the key to understand the matrix model in terms of gravity\footnote{The role 
of torsion in a dimensionally reduced noncommutative gauge theory as 
related to gravity was already pointed out in \cite{Langmann:2001yr},
however the specifics are different. 
Torsion arises in \eq{torsion-explicit} as {\em derivative} of the NC field strength, unlike in \cite{Langmann:2001yr}.
Also, the action in previous work 
is typically given by a contraction of torsion, which is not the case here.
See also e.g. \cite{Ciric:2016isg} for other work related to torsion in a similar context.},
and the  equations of motion of the model 
will be re-formulated in terms of  torsion below.
For a perturbed cosmic background $Z_{{\dot\a}} = t_{{\dot\a}} + \cA_{{\dot\a}}$, we have
\begin{align}
 \hat\Theta_{{{\dot\a}}{\dot\b}} &= \frac{1}{r^2 R^2}\theta_{{{\dot\a}}{\dot\b}} 
 -(\{t_{{\dot\a}},\cA_{\dot\b}\} - \{t_{\dot\b},\cA_{{\dot\a}}\} + \{\cA_{{\dot\a}},\cA_{\dot\b}\}) \  ,
\end{align}
so that that  $\hat\Theta_{{{\dot\a}}{\dot\b}}\in \cC^1$ up to higher-spin corrections.
More explicitly, the torsion tensor is 
\begin{align}
 \tensor{T}{_{\dot\a}_{\dot\b}^\mu} &= \nabla_{{\dot\a}} \tensor{E}{_{\dot\b}^\mu} - \nabla_{\dot\b} \tensor{E}{_{\dot\a}^\mu} 
          - [E_{{\dot\a}},E_{\dot\b}]^\mu    \nn\\
    &= E_{{\dot\a}}[\tensor{E}{_{\dot\b}^\mu}] - E_{\dot\b}[ \tensor{E}{_{\dot\a}^\mu}] 
       + \tensor{\G}{_{\dot\a}_\r^\mu}\tensor{E}{_{\dot\b}^\r} - \tensor{\G}{_{\dot\b}_\r^\mu}\tensor{E}{_{\dot\a}^\r} 
       - [E_{{\dot\a}},E_{\dot\b}]^\mu  \nn\\
    &= \tensor{\G}{_{\dot\a}_\r^\mu}\tensor{E}{_{\dot\b}^\r} - \tensor{\G}{_{\dot\b}_\r^\mu}\tensor{E}{_{\dot\a}^\r}   \
    = \tensor{\G}{_{\dot\a}_{\dot\b}^\mu} - \tensor{\G}{_{\dot\b}_{\dot\a}^\mu} \nn \\
 \tensor{T}{_{\mu}_{\nu}^\r} &=\tensor{\G}{_{\mu}_{\nu}^\r} - \tensor{\G}{_{\nu}_{\mu}^\r} \ 
\end{align}
using the Jacobi identity
$E_{{\dot\a}}[E^\mu_{\dot\b}] - E_{\dot\b}[E^\mu_{{\dot\a}}]  \equiv [E_{{\dot\a}},E_{\dot\b}]^\mu$.
The  gauge transformation of the torsion tensor in the asymptotic regime is obtained again from \eq{gauge-VF-M},
\begin{align}
 \d_\L \hat\Theta_{{{\dot\a}}{\dot\b}} &= \{\L, \hat\Theta_{{{\dot\a}}{\dot\b}}\} \ , \nn\\
 \d_\L \tensor{T}{_{\dot\a}_{\dot\b}^\mu}
   &\sim \xi^\nu\del_\nu \tensor{T}{_{\dot\a}_{\dot\b}^\mu} - \tensor{T}{_{\dot\a}_{\dot\b}^\nu} \del_\nu \xi^\mu \
  = \bar\cL_\xi \tensor{T}{_{\dot\a}_{\dot\b}^\mu}
\end{align}
where $\xi^\nu = \{\L,x^\nu\}$ is a $\hs$-valued vector field on $\cM$. 
Together with \eq{gaugetrafo-frame} this implies 
\begin{align}
 \d_\L \tensor{T}{_{\r}_{\s}^\mu} \sim \bar\cL_\xi \tensor{T}{_{\r}_{\s}^\mu}
\end{align}
and similarly for the effective frame using \eq{rho-invar-M}.
Hence torsion transforms as a covariant tensor, just like the metric.

\paragraph{Relation with the effective Levi-Civita connection.}

Now consider 
the Levi-Civita connection $\nabla^{(\g)}$  for the  metric $\g^{\mu\nu}$, which is obtained 
as usual from the Christoffel symbols
\begin{align}
  \tensor{\G}{^{(\g)}_\mu_\nu^\s}
  &= \frac 12 \g^{\s\r}\Big(\del_\mu \g_{\r\nu} 
   + \del_\nu \g_{\r\mu}
  - \del_\r \g_{\mu\nu}\Big) \nn\\
  &=   \frac 12 \g^{\s\r}\Big(\tensor{\G}{_\mu_\r_\nu} + \tensor{\G}{_\mu_\nu_\r}
   + \tensor{\G}{_\nu_\r_\mu} + \tensor{\G}{_\nu_\mu_\r}
  - \tensor{\G}{_\r_\mu_\nu} - \tensor{\G}{_\r_\nu_\mu}\Big) \nn\\
  %
   &= \tensor{\G}{_\mu_\nu^\s}  - \tensor{K}{_\mu_\nu^\s} \ .
   \label{LC-contorsion-1}
\end{align}
Here
\begin{align}
 \tensor{K}{_{\mu}_{\nu}^{\s}}
 &= \frac 12 (\tensor{T}{_{\mu}_{\nu}^{\s}} 
             + \tensor{T}{^{\s}_{\mu}_{\nu}} 
             - \tensor{T}{_{\nu}^{\s}_{\mu}})
 = -  \tensor{K}{_\mu^\s_\nu} \quad \in \cC
  \label{Levi-contorsion-basic}
\end{align}
is (a higher-spin analog of) the contorsion of the basic Weitzenb\"ock connection, 
which is antisymmetric in ${\nu}{\s}$.
Therefore
\begin{align}
 \tensor{\Gamma}{_\mu_\nu^\r} &= \tensor{\G}{^{(\g)}_\mu_\nu^\r} + \tensor{K}{_\mu_\nu^\r} \nn\\
 \nabla_{{\mu}} V^\nu &= \nabla^{(\g)}_{{\mu}} V^\nu + \tensor{K}{_{\mu}_\r^\nu} V^\r \ .
 \label{relation-contorsion-levi}
\end{align}
Similarly, the Levi-Civita connection $\nabla^{(G)}$  for the effective metric $G^{\mu\nu}$ 
 is obtained as  
\begin{align}
  \tensor{\G}{^{(G)}_\mu_\nu^\s}
 &= \frac 12 G^{\s\r}\Big(\del_\mu G_{\r\nu} 
   + \del_\nu G_{\r\mu}
  - \del_\r G_{\mu\nu}\Big) \nn\\
 &= \frac 12\r^{-2}\Big( \d^\s_\nu\del_\mu \r^2 
   +  \d^\s_\mu\del_\nu \r^2 
  -  \g_{\mu\nu}\g^{\s\r}\del_\r \r^2 \Big) 
  +  \frac 12 \g^{\s\r}\Big(\del_\mu \g_{\r\nu} 
   + \del_\nu \g_{\r\mu}
  - \del_\r \g_{\mu\nu}\Big)  
  %
  %
\end{align}
which together with the above gives
\begin{align}
\boxed{\ 
  \tensor{\G}{^{(G)}_\mu_\nu^\s}
 = \tensor{\tilde\G}{_\mu_\nu^\s}  - \tensor{\cK}{_\mu_\nu^\s} 
 = \tensor{\G}{_\mu_\nu^\s} + \d^\s_\nu \r^{-1} \del_\mu \r - \tensor{\cK}{_\mu_\nu^\s} 
\ }
\label{LC-contorsion-eff}
\end{align}
Here
\begin{align}
 \tensor{\tilde\G}{_\mu_\nu^\s} 
  &:= \tensor{\G}{_\mu_\nu^\s} + \d^\s_\nu \r^{-1} \del_\mu \r \ , \nn\\
 \tensor{\cK}{_\mu_\nu^\s} &=
   \tensor{K}{_{\mu}_{\nu}^{\s}}
  + \Big(G_{\mu\nu}\r^{-1} \del^\s \r - \d^\s_{\mu}\r^{-1}\del_\nu \r\Big) 
   = -  \tensor{\cK}{_\mu^\s_\nu} 
  \label{Levi-contorsion-full}
\end{align}
 will be recognized below as Weitzenb\"ock connection and
contorsion of the effective frame, and accordingly the indices should be raised and lowered with $G^{\mu\nu}$.
To avoid any confusion with the two metrics $\g^{\mu\nu}$ and $G^{\mu\nu}$,
all connection and (con)torsion symbols will be written with two lower and 
one upper index, where no ambiguity arises.
%
This allows to rewrite the effective Levi-Civita connection in terms of the 
Weitzenb\"ock connection and the torsion,
which amounts to the simple rule for the covariant derivatives
\begin{align}
 \nabla^{(G)}_\mu V^{\s} &= \nabla_\mu V^{\s}
 - \tensor{\cK}{_\mu_\nu^\s} V^\nu
  +  \r^{-1}\del_{\mu}\r\, V^\s \nn\\
\nabla^{(G)}_\mu V_\s  &= \nabla_\mu V_\s
  + \tensor{\cK}{_\mu_\s^\nu} V_\nu
   - \r^{-1}\del_{\mu}\r\, V_\s
   \label{nabla-nablaLC-rel}
\end{align}
and similarly for higher-rank tensors.

\paragraph{Effective (rescaled) frame.}

The rescaled or effective frame \eq{rescaled-frame-def} for the effective metric\footnote{Note that 
$\tensor{\cE}{_{\dot\a}^\mu}$ is {\em not} the Hamiltonian vector field 
 associated to $\frac{1}{\rho} \cE_{\dot\a}$.}
\begin{align}
 \tensor{\cE}{_{\dot\a}^\mu} = \frac{1}{\rho} \tensor{E}{_{\dot\a}^\mu} 
 = \frac{1}{\rho} \{Z_{\dot\a},y^\mu\}
\end{align}
gives rise to an associated Weitzenb\"ock connection 
which is compatible with the effective metric,
\begin{align}
\tilde\nabla \tensor{\cE}{_{\dot\a}^\mu} = 0 =
\tilde\nabla G^{\mu\nu} \ .
\end{align}
 One must  be very careful with the frame indices, since there are two different frames in the game.
 It is therefore safer to use the coordinate form. 
 Then 
 \begin{align}
 \tensor{\tilde\Gamma}{_{\nu}_{\dot\a}^\mu} 
 &= - \del_\nu\tensor{\cE}{_{\dot\a}^\mu} \ 
 = -  \del_{\nu}\big(\r^{-1}\tensor{E}{_{\dot\a}^\mu}\big)
   =  \r^{-1}\tensor{\Gamma}{_{\nu}_{\dot\a}^\mu}
    + \r^{-1} \del_\nu\rho\tensor{\cE}{_{\dot\a}^\mu} \
     =: \tensor{\tilde\Gamma}{_{\nu}_\r^\mu}\tensor{\cE}{_{\dot\a}^\r} \ \nn\\
 \tensor{\tilde\Gamma}{_{\nu}_\s^\mu} &=  \tensor{\Gamma}{_{\nu}_{\s}^\mu}
   + \r^{-1} \d^\mu_\s \, \del_\nu\rho\ .
 \label{eff-Gamma-rel}    
 \end{align}  
 For the covariant derivatives, this amounts to the simple rule
\begin{align}
 \tilde\nabla_\mu V^{\s} &= \nabla_\mu V^{\s}
  +  \r^{-1}\del_{\mu}\r\, V^\s \nn\\
\tilde\nabla_\mu V_\s  &= \nabla_\mu V_\s
   - \r^{-1}\del_{\mu}\r\, V_\s
   \label{nabla-nablatilde-rel}
\end{align}
and similarly for higher-rank tensors.
Then the torsion tensor is
\begin{align}
 \tensor{\cT}{_{\mu}_{\nu}^\s} &= \tensor{\tilde \G}{_{\mu}_{\nu}^\s} - \tensor{\tilde \G}{_{\nu}_{\mu}^\s} \ 
  = \tensor{T}{_{\mu}_{\nu}^\s} + \r^{-1} \big(\d_{\nu}^\s\del_{\mu}\rho - \d_{\mu}^\s\del_{\nu}\rho \big) \ 
    \label{tilde-T-T}
\end{align}
and the effective contorsion is related to that of the basic frame as follows 
\begin{align}
 \tensor{\cK}{_{\mu}_{\nu}_\s} 
  &= \tensor{K}{_{\mu}_{\nu}_\s}  +\r^{-1} \big(G_{\mu\nu} \del_\s\r - G_{\mu\s} \del_\nu\r  \big)
   = -\tensor{\cK}{_{\mu}_{\s}_{\nu}} 
  \label{tilde-K-K}
\end{align}
in complete agreement with \eq{Levi-contorsion-full}.
Calligraphic fonts (or a tilde) indicate the rescaled frame.

\paragraph{(Harmonic) normal coordinates.}

We will denote  coordinates around some point $p\in\cM$
as {\em normal coordinates} at $p$ if the Christoffel symbols vanish at $p$,
\begin{align}
 {\tensor{\G}{^{(G)}_{\nu\s}^\mu}}|_p &= 0 \ .
\end{align}
Then the Weitzenb\"ock connection coincides with the contorsion due to \eq{LC-contorsion-eff},
\begin{align}
 \qquad \tensor{\cK}{_{\mu}_{\nu}^{\s}} = \tensor{\tilde\G}{_{\mu}_{\nu}^{\s}} \qquad \mbox{at} \ \ p \ .
\label{Weitzenbock-Torsion-RiemanNC}
\end{align}
Hence if torsion vanishes, the Levi-Civita connection coincides with the 
Weitzenb\"ock connection, and both are flat.
This can be done either for the basic metric $\g^{\mu\nu}$ or 
for  the effective metric $G^{\mu\nu}$, but typically not for both simultaneously.
In the present context, the Christoffel symbols
will be $\hs$-valued in general, and so are\footnote{Using $\hs$-valued coordinates $y^\mu$
on $\cM$ simply amounts to a deformation of the 
bundle projection $\Pi$ \eq{projection-hs}.
The algebra generated by these 4 generators $y^\mu$ is formally
isomorphic to the algebra of functions on  $\cC^\infty(\R^{3,1})$.} the normal coordinates $y^\mu$.

In particular, consider harmonic coordinates local coordinates $y^\mu$ on $\cM$ around $p$,
which by definition satisfy  $\Box_G  y^\mu = 0$.
We can demand {\em in addition} that at any given point they are also normal coordinates, so that
\begin{align}
 \Box_G  y^\mu &\equiv 0 \nn\\
  {\tensor{\G}{^{(G)}_{\nu\s}^\mu}}|_p &= 0
  \label{harm-NC-def}
\end{align}
To see that they exist, it suffices to note that one can find harmonic functions $y^\mu$
with any prescribed ``boundary value'' and normal derivative for any hypersurface
through $p$;
thus one can choose the ${\tensor{\G}{^{(G)}_{\nu\s}^\mu}}|_p$
freely up to $\G_{(G)}^\mu$, which vanishes by the harmonic condition.
We denote such coordinates as ``harmonic normal coordinates'', which will be used in the alternative 
derivation of the Ricci tensor in section \ref{sec:alt-Ricci}.

\subsection{Some useful identities}

It was shown in \cite{Steinacker:2010rh} that for symplectic manifolds,
the Matrix or Poisson d'Alembertian $\Box$ \eq{Box-def} is proportional to the  metric d'Alembertian 
\begin{align}
 \Box_G \phi = - \frac{1}{\sqrt{|G_{\mu\nu}|}}
    \del_\mu\big(\sqrt{|G_{\mu\nu}|}G^{\mu\nu}\del_\nu\phi\big)
\end{align}
The same relation holds  
in the present reduced 4-dimensional setting\footnote{This is bound to hold 
due to the covariant form \eq{scalar-action-G} of the kinetic term.}:

\begin{lem}
 \label{lem:Box}
 The Matrix or Poisson d'Alembertian $\Box$ \eq{Box-def} is related to the 
 metric d'Alembertian  $\Box_G \phi$ 
via
\begin{align}
 \Box\phi = -\{Z_{\dot\a},\{Z^{\dot\a},\phi\}\} 
 = \r^2 \Box_G \phi 
 \label{Box-Box-relation}
\end{align}
in the asymptotic regime\footnote{The only ``approximation'' used is \eq{Poisson-conservation}, which is established  
 in the asymptotic regime only, although it appears to hold more generally.},
 with $\rho$ given in \eq{rho-def}.
Furthermore, the following identities hold
\begin{align}
 \tensor{\G}{_\r_\mu^\r} &= \r_M^{-1}\del_\mu \r_M \
  \label{Gamma-mu-t-mu-id}  \\
  \Box y^\s &= \del_\mu\g^{\mu\s}  
    + \tensor{\G}{_\r^\s^\r}  
   = -\tensor{\G}{_\mu^\mu^\s} 
   = \tensor{\Gamma}{_{\nu}_\r^\mu} \g^{\nu\r} 
   = \tensor{\Gamma}{^{(G)}_{\nu}_\r^\mu} \g^{\nu\r} \label{Gamma-contract-id}\\
\tensor{\G}{_{\r}_\mu^\mu} &= - \del_\r\ln(\sqrt{|\gamma^{\mu\nu}|}) \ .
   \label{Gamma-det-identity}
\end{align}

\end{lem}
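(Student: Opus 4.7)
The plan is to establish the four identities in the order \eq{Gamma-mu-t-mu-id}, then \eq{Gamma-det-identity}, then \eq{Gamma-contract-id}, and finally the main relation $\Box\phi = \r^2\Box_G\phi$. All manipulations take place in the asymptotic regime \eq{asymptotic-scale}, so I can freely use the approximation $\{g,f\} \sim \{g,x^\mu\}\del_\mu f$ from \eq{poisson-approx-x} and the Poisson-conservation identity \eq{Poisson-conservation}.

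First I would derive \eq{Gamma-mu-t-mu-id}. From the Weitzenb\"ock condition \eq{weizenbock-gamma-expl}, contracting with $\tensor{E}{^{\dot\a}_\mu}$ gives $\tensor{\G}{_\r_\mu^\nu} = -\tensor{E}{^{\dot\a}_\mu}\del_\r \tensor{E}{_{\dot\a}^\nu}$, so $\tensor{\G}{_\r_\mu^\r} = -\tensor{E}{^{\dot\a}_\mu}\del_\r \tensor{E}{_{\dot\a}^\r}$. To evaluate the divergence, I write $\tensor{E}{_{\dot\a}^\r} = \{Z_{\dot\a},y^\r\} \sim \theta^{\mu\r}\del_\mu Z_{\dot\a}$ via \eq{poisson-approx-x}, and compute
\begin{align*}
\del_\r(\r_M \tensor{E}{_{\dot\a}^\r}) \sim \del_\r(\r_M\theta^{\mu\r})\del_\mu Z_{\dot\a} + \r_M\theta^{\mu\r}\del_\mu\del_\r Z_{\dot\a} \sim 0,
\end{align*}
where the first term vanishes by \eq{Poisson-conservation} and the second by antisymmetry of $\theta^{\mu\r}$. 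Thus $\del_\r \tensor{E}{_{\dot\a}^\r} \sim -\r_M^{-1}(\del_\r \r_M)\tensor{E}{_{\dot\a}^\r}$, and contracting with $\tensor{E}{^{\dot\a}_\mu}$ yields $\tensor{\G}{_\r_\mu^\r} \sim \r_M^{-1}\del_\mu \r_M$. The identity \eq{Gamma-det-identity} is then a purely algebraic consequence of differentiating $\g^{\mu\nu} = \eta^{\dot\a\dot\b}\tensor{E}{_{\dot\a}^\mu}\tensor{E}{_{\dot\b}^\nu}$ with the Weitzenb\"ock rule, obtaining $\del_\r \g^{\mu\nu} = -(\tensor{\G}{_\r_\s^\mu}\g^{\s\nu} + \tensor{\G}{_\r_\s^\nu}\g^{\s\mu})$, and contracting with $\g_{\mu\nu}$ using $\del_\r \ln|\g^{\mu\nu}| = \g_{\mu\nu}\del_\r \g^{\mu\nu}$.

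For \eq{Gamma-contract-id} I would compute $\Box y^\s$ in two ways. Direct application of \eq{poisson-approx-x} to $\Box y^\s = -\{Z^{\dot\a},\tensor{E}{_{\dot\a}^\s}\}$ together with \eq{weizenbock-gamma-expl} gives $\Box y^\s \sim \tensor{\G}{_\nu_\r^\s}\g^{\nu\r}$, which is the third form. Using instead the Leibniz rule, $E^{\dot\a\nu}\del_\nu \tensor{E}{_{\dot\a}^\s} = \del_\nu \g^{\nu\s} - (\del_\nu E^{\dot\a\nu})\tensor{E}{_{\dot\a}^\s}$, and substituting the result of step one produces the form involving $\del_\mu \g^{\mu\s}$ and $\tensor{\G}{_\r^\s^\r}$. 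The equivalence with $\tensor{\Gamma}{^{(G)}_\nu_\r^\s}\g^{\nu\r}$ will follow from the final identity $\Box y^\s = \r^2 \Box_G y^\s$, since on coordinates the metric d'Alembertian reduces to a pure Christoffel contraction $\Box_G y^\s = -G^{\mu\nu}\tensor{\Gamma}{^{(G)}_\mu_\nu^\s}$.

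Finally, for the main identity $\Box\phi = \r^2\Box_G\phi$, I would apply \eq{poisson-approx-x} twice to $-\{Z^{\dot\a},\{Z_{\dot\a},\phi\}\}$ to get
\begin{align*}
\Box\phi \sim -\g^{\mu\nu}\del_\mu\del_\nu\phi - E^{\dot\a\nu}(\del_\nu \tensor{E}{_{\dot\a}^\mu})\del_\mu\phi.
\end{align*}
Rewriting the second term as $-[\del_\nu\g^{\mu\nu} - (\del_\nu E^{\dot\a\nu})\tensor{E}{_{\dot\a}^\mu}]\del_\mu\phi$ and substituting $(\del_\nu E^{\dot\a\nu})\tensor{E}{_{\dot\a}^\mu} = -\tensor{\G}{_\nu_\r^\nu}\g^{\r\mu} = -\r_M^{-1}(\del_\r\r_M)\g^{\r\mu}$ from step one, the whole expression collapses to $-\r_M^{-1}\del_\mu(\r_M \g^{\mu\nu}\del_\nu\phi)$. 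Using $\sqrt{|G_{\mu\nu}|} = \r^2 \r_M$ and $G^{\mu\nu} = \r^{-2}\g^{\mu\nu}$ (which follow from the definitions \eq{eff-metric}, \eq{rho-def}), this is precisely $\r^2 \Box_G\phi$. The main technical obstacle is that the argument rests on applying \eq{poisson-approx-x} and \eq{Poisson-conservation} nested twice (once inside a derivative), so one must verify that the neglected corrections remain suppressed by $1/(|x|\del)$ at every step; apart from this asymptotic control, the rest is linear algebra of the Weitzenb\"ock frame.
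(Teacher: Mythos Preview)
Your proposal is correct and follows essentially the same strategy as the paper's proof in section \ref{sec:proof-Lemma-Box}: both rely on the divergence identity $\del_\r(\r_M \tensor{E}{_{\dot\a}^\r}) \sim 0$ obtained from \eq{Poisson-conservation} to establish \eq{Gamma-mu-t-mu-id}, and both then reorganize $\Box\phi$ into $-\r_M^{-1}\del_\mu(\r_M\g^{\mu\nu}\del_\nu\phi)$ before identifying it with $\r^2\Box_G\phi$ via $\sqrt{|G_{\mu\nu}|} = \r^2\r_M$. The only cosmetic difference is ordering: the paper first derives \eq{Gamma-det-identity} via a Poisson-bracket manipulation of $\tensor{\G}{_{\dot\g}_\mu^\mu}$ and obtains \eq{Gamma-mu-t-mu-id} later, whereas you reverse the order and get \eq{Gamma-det-identity} by the cleaner purely algebraic route of differentiating $\g^{\mu\nu}$ and contracting.
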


This is proved in appendix \ref{sec:proof-Lemma-Box},  by writing out  $\Box$ using vielbein and Weitzenb\"ock 
connection and using \eq{Poisson-conservation}. 
This is a generalization of a similar result in \cite{Steinacker:2010rh} for symplectic branes.
It leads to the following explicit formulas for the contraction of the 
(con)torsion:

\begin{lem}
\label{lem:contraction-torsion}
The (con)torsion associated to the basic frame  satisfies
 \begin{align}
   \tensor{T}{_\mu_{\s}^\mu} = \tensor{K}{_\mu_{\s}^\mu} &= \frac{2}{\rho}\del_\s\r \ .
 \label{tilde-T-T-contract}
 \end{align}
For the
rescaled frame resp. effective metric, we have
 \begin{align}
  \tensor{\cT}{_\mu_{\s}^\mu}  = \tensor{\cK}{_\mu_{\s}^\mu} 
  &= - \r^{-1} \del_\s\rho \ .
 \label{torsion-contorsion-contract}
\end{align}

\end{lem}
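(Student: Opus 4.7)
My plan is a direct computation built on two inputs: the antisymmetry $T_{\mu\nu}{}^{\rho}=\Gamma_{\mu\nu}{}^{\rho}-\Gamma_{\nu\mu}{}^{\rho}$ in its first two indices, and the pair of identities for contractions of $\Gamma$ just proved in Lemma~\ref{lem:Box}. First, I would write
\begin{align}
\tensor{T}{_\mu_\sigma^\mu} = \tensor{\Gamma}{_\mu_\sigma^\mu} - \tensor{\Gamma}{_\sigma_\mu^\mu}
\end{align}
and plug in \eq{Gamma-mu-t-mu-id}, $\tensor{\Gamma}{_\mu_\sigma^\mu}=\rho_M^{-1}\del_\sigma\rho_M$, together with \eq{Gamma-det-identity}, $\tensor{\Gamma}{_\sigma_\mu^\mu}=-\del_\sigma\ln\sqrt{|\gamma^{\mu\nu}|}$. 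The two right-hand sides combine into $\del_\sigma\ln\bigl(\rho_M\sqrt{|\gamma^{\mu\nu}|}\bigr)=\del_\sigma\ln\rho^2=2\rho^{-1}\del_\sigma\rho$, where I used the defining relation \eq{rho-def}, $\rho^2=\rho_M\sqrt{|\gamma^{\mu\nu}|}$. This establishes $\tensor{T}{_\mu_\sigma^\mu}=\tfrac{2}{\rho}\del_\sigma\rho$.

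Next I would reduce the $K$-contraction to the $T$-contraction using the definition \eq{Levi-contorsion-basic}. Expanding
\begin{align}
\tensor{K}{_\mu_\sigma^\mu}=\tfrac{1}{2}\bigl(\tensor{T}{_\mu_\sigma^\mu}+\tensor{T}{^\mu_\mu_\sigma}-\tensor{T}{_\sigma^\mu_\mu}\bigr),
\end{align}
the middle term vanishes because $\gamma^{\alpha\beta}$ is symmetric and $T_{\alpha\beta\sigma}$ antisymmetric in $\alpha\beta$. For the last term, raising the second index and contracting with the third gives $\tensor{T}{_\sigma^\mu_\mu}=\tensor{T}{_\sigma_\mu^\mu}=-\tensor{T}{_\mu_\sigma^\mu}$ by the antisymmetry. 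The three pieces therefore assemble into $\tensor{K}{_\mu_\sigma^\mu}=\tensor{T}{_\mu_\sigma^\mu}$, giving the first claim of the lemma.

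For the rescaled statement I would feed the explicit relation \eq{tilde-T-T},
\begin{align}
\tensor{\cT}{_\mu_\nu^\sigma}=\tensor{T}{_\mu_\nu^\sigma}+\rho^{-1}\bigl(\delta^\sigma_\nu\del_\mu\rho-\delta^\sigma_\mu\del_\nu\rho\bigr),
\end{align}
into the same contraction. The extra piece produces $\rho^{-1}(\del_\sigma\rho-4\del_\sigma\rho)=-3\rho^{-1}\del_\sigma\rho$, so
\begin{align}
\tensor{\cT}{_\mu_\sigma^\mu}=\tfrac{2}{\rho}\del_\sigma\rho-\tfrac{3}{\rho}\del_\sigma\rho=-\rho^{-1}\del_\sigma\rho.
\end{align}
Finally, since $\tensor{\cT}{_\mu_\nu^\sigma}$ still satisfies the same antisymmetry in $\mu\nu$ (it descends from $\tilde\Gamma_{\mu\nu}{}^\sigma-\tilde\Gamma_{\nu\mu}{}^\sigma$) and $\cK$ is built from $\cT$ by the same formula as $K$ from $T$ but with indices now raised and lowered with $G^{\mu\nu}$, the identical symmetry argument yields $\tensor{\cK}{_\mu_\sigma^\mu}=\tensor{\cT}{_\mu_\sigma^\mu}=-\rho^{-1}\del_\sigma\rho$.

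The only subtlety, and the main thing to be careful about, is the bookkeeping of which metric raises and lowers indices in the contorsion formula; once one notices that the antisymmetric/symmetric cancellation works equally well for $\gamma^{\mu\nu}$ and $G^{\mu\nu}$, both halves of the lemma reduce to the single scalar identity $\rho^2=\rho_M\sqrt{|\gamma^{\mu\nu}|}$ and the two $\Gamma$-trace formulas already in hand. No further input is needed.
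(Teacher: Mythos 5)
Your proof is correct and follows essentially the same route as the paper: the two $\Gamma$-trace identities from Lemma~\ref{lem:Box} combine via $\rho^2=\rho_M\sqrt{|\gamma^{\mu\nu}|}$ to give $\tensor{T}{_\mu_\sigma^\mu}=2\rho^{-1}\del_\sigma\rho$, the antisymmetry of $T$ in its first two indices collapses the contorsion trace to the torsion trace, and the rescaling relation (you use \eq{tilde-T-T}, the paper equivalently uses the $\tilde\Gamma$ contractions \eq{Gamma-tilde-rel-contract}) yields the $-\rho^{-1}\del_\sigma\rho$ for the effective frame. No gaps.
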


\begin{proof}
 
Using the above results, we can evaluate  the contraction of the torsion
\begin{align}
 \tensor{T}{_\mu_{\r}^\mu} &= \tensor{\G}{_\mu_{\r}^\mu} - \tensor{\G}{_{\r}_\mu^\mu} 
  =  \del_\r\big(\ln(\sqrt{|\gamma^{\mu\nu}|}) + \ln\r_M\big) \nn\\
   &= \del_\r\ln(\r_M\sqrt{|\gamma^{\mu\nu}|}) 
   = \del_\r\ln(\r^2) 
\end{align}
using $\sqrt{|\gamma^{\mu\nu}|} = \r^2 \r_M^{-1}$,
and 
\begin{align}
\tensor{K}{_{\mu}_{\nu}^{\mu}}
 = \frac 12 (\tensor{T}{_{\mu}_{\nu}^{\mu}} 
             + \tensor{T}{_{\mu}^{\mu}_{\nu}} 
             - \tensor{T}{_{\nu}^{\mu}_{\mu}})
 = \tensor{T}{_{\mu}_{\nu}^{\mu}}
\end{align}
using \eq{Levi-contorsion-basic}.
For the rescaled frame, \eq{eff-Gamma-rel} gives
\begin{align}    
  \tensor{\tilde\Gamma}{_{\mu}_\s^\mu}
 &=  \tensor{\Gamma}{_{\mu}_{\s}^\mu}
   + \r^{-1} \del_\s\rho  , 
\qquad \tensor{\tilde\Gamma}{_{\nu}_\mu^\mu}
 =  \tensor{\Gamma}{_{\nu}_{\mu}^\mu}
   + \frac{4}{\rho} \, \del_\nu\rho\  
%
 \label{Gamma-tilde-rel-contract}
\end{align}  
Therefore
\begin{align}
  \tensor{\cT}{_\mu_{\s}^\mu} 
  &=  \tensor{\tilde \G}{_\mu_{\s}^\mu} - \tensor{\tilde \G}{_{\s}_\mu^\mu} 
  =  \tensor{T}{_\mu_{\s}^\mu} - \frac{3}{\rho} \del_\s\rho  
  = - \r^{-1} \del_\s\rho 
  =  \tensor{\cK}{_\mu_{\s}^\mu}  \ . 
\end{align}

\end{proof}

\subsection{Equation of motion for the torsion}
\label{sec:conservation-law}

Now consider the equation of motion for torsion, which 
which results from the basic matrix equation of motion \eq{matrix-eom}
\begin{align}
 \{Z^{\dot\a},\hat\Theta_{{\dot\a}{\dot\b}}\} = m^2 Z_{\dot\b}
 \label{eom-YM-Theta}
\end{align}
in vacuum. Recalling  $\tensor{T}{_{\dot\a\dot\b}^\mu} = \{\hat\Theta_{\dot\a\dot\b},y^\mu\}$, we obtain the coordinate version from
\begin{align}
  m^2  \{Z_{\dot\b},y^\mu\}  &= \{\{Z^{\dot\a},\hat\Theta_{\dot\a\dot\b}\},y^\mu\} 
  = -  \{\{\hat\Theta_{\dot\a\dot\b},y^\mu\},Z^{\dot\a}\} - \{\{y^\mu,Z^{\dot\a}\},\hat\Theta_{\dot\a\dot\b}\} \nn\\
  &=  \{Z^{\dot\a},\tensor{T}{_{\dot\a\dot\b}^\mu}\} 
  + \{\tensor{E}{^{\dot\a}^\mu},\hat\Theta_{{\dot\a\dot\b}}\}  \ .
 \end{align}
The second term can be rewritten as  
\begin{align} 
 \{\tensor{E}{^{\dot\a}^\mu},\hat\Theta_{{\dot\a\dot\b}}\} 
 \sim  - \del_\nu \tensor{E}{^{\dot\a}^\mu} \tensor{T}{_{\dot\a\dot\b}^\nu}  
  = \tensor{\Gamma}{_\nu_\s^\mu} \tensor{E}{^{\dot\a}^\s} \tensor{T}{_{\dot\a\dot\b}^\nu} 
  = (\tensor{\Gamma}{_\s_\nu^\mu}  + \tensor{T}{_\nu_\s^\mu})\tensor{T}{^{\s}_{\dot\b}^\nu}  
  \end{align} 
 using the approximation \eq{poisson-approx-x} in the asymptotic regime.
 Furthermore, we note that
\begin{align}
 \nabla_{\nu} \tensor{T}{^{\nu}_{\dot\b}^\mu} 
 &= \nabla_{\nu} (\tensor{E}{_{\dot\a}^\nu}\tensor{T}{^{\dot\a}_{\dot\b}^\mu}) 
 = \tensor{E}{_{\dot\a}^\nu}\nabla_{\nu} \tensor{T}{^{\dot\a}_{\dot\b}^\mu} 
 = \tensor{E}{_{\dot\a}^\nu}
    (\del_{\nu} \tensor{T}{^{\dot\a}_{\dot\b}^\mu} 
    + \tensor{\Gamma}{_\nu_\s^\mu} \tensor{T}{^{\dot\a}_{\dot\b}^\s}) \nn\\
 &=  \{Z_{\dot\a}, \tensor{T}{^{\dot\a}_{\dot\b}^\mu}\}
   + \tensor{\Gamma}{_\s_\nu^\mu} \tensor{T}{^\s_{\dot\b}^\nu}
\end{align}
since $\nabla E = 0$.
Combining this we obtain
\begin{align}   
 m^2  \{Z_{\dot\b},y^\mu\} 
 &= \nabla_{\nu} \tensor{T}{^{\nu}_{\dot\b}^\mu} 
 + \tensor{T}{_\nu_{\s}^\mu} \tensor{T}{^{\s}_{\dot\b}^\nu} 
\end{align}
which amounts to
\begin{align}
 \nabla_\nu \tensor{T}{^\nu_\r^\mu}  +   \tensor{T}{_\nu_{\s}^\mu}\tensor{T}{^\s_\r^\nu} 
  = m^2 \d_\r^\mu \ .
\label{torsion-eq-nonlin-coord-1}
\end{align}
Upon lowering an index with $\g_{\mu\nu}$ we obtain
\begin{align}
\boxed{ \ 
 \nabla_\nu \tensor{T}{^\nu_\r_\mu}  +   \tensor{T}{_\nu^{\s}_\mu}\tensor{T}{_\s_\r^\nu} 
  = m^2 \g_{\r\mu} \  .
  }
\label{torsion-eq-nonlin-coord}
\end{align}
This non-linear equation encodes the non-linear structure of the Yang-Mills 
equations of motion \eq{eom-YM-Theta}.
Contracting with $\g^{\mu\r}$ gives 
\begin{align}
 \nabla_\nu \tensor{T}{^\nu_\mu^\mu}  + \tensor{T}{_\nu^{\s}_\mu}\tensor{T}{_\s_\r^\nu} \g^{\mu\r} = 4 m^2   \ .
\end{align}
The first term can be evaluated  using  \eq{tilde-T-T-contract} and \eq{Box-Box-relation}
and \eq{Gamma-contract-id} as
\begin{align}
 \nabla_\nu \tensor{T}{^\nu_\mu^\mu} 
  &= - 2\nabla_\nu (\r^{-1}\g^{\nu\s}\del_\s\r) 
  = 2 \r^{-2}\del_\nu\r\, \g^{\nu\s}\del_\s \r
   - 2 \r^{-1}\g^{\nu\s}\nabla_\nu\del_\s\r \nn\\
  &=  2 \r^{-2}\del_\nu\r\, \g^{\nu\s}\del_\s \r
   - 2 \r^{-1}\g^{\nu\s}(\del_\nu \del_\s\r- \tensor{\G}{_\nu_\s^\mu} \del_\mu\r) \nn\\
  &=  2 G^{\nu\s}\del_\nu\r\, \del_\s \r
   -2 \r G^{\nu\s}(\del_\nu \del_\s\r -  \tensor{\G}{^{(G)}_\nu_\s^\mu} \del_\mu\r) \nn\\
  &=  2 \big(G^{\mu\nu}\del_\mu\r\del_\nu \r + \r \Box_G \r\big) \ .
\end{align}
 We therefore obtain
\begin{align}
  \boxed{ \
  \r \Box_G \r + G^{\mu\nu}\del_\mu\r\del_\nu \r 
     = 2 m^2 - \frac 12 \tensor{T}{_\nu^{\s}_\mu}\tensor{T}{_\s_\r^\nu} \g^{\mu\r}
 \ . }
 \label{Box-r-onshell}
\end{align}
According to the discussion in section \ref{sec:poisson-red},
all these equations are exact for $\cC^0$, and asymptotic
for the higher spin components.
We verify in appendix \ref{sec:cosm-BG} that these equations are indeed satisfied 
exactly for the background solution and its torsion.
If desired,  \eq{torsion-eq-nonlin-coord}
 can be rewritten in terms of the Levi-Civita connection
using \eq{nabla-nablaLC-rel},
\begin{align}
 m^2 \g_{\r\mu} 
&= \nabla_\nu \tensor{T}{^\nu_\r_\mu}  +  \tensor{T}{_\nu^{\s}_\mu}\tensor{T}{_\s_\r^\nu} \nn\\
 &= \nabla^{(G)}_\nu \tensor{T}{^\nu_\r_\mu}  
   - \tensor{\cK}{_\nu_\r^\s}\tensor{T}{^\nu_\s_\mu} 
   - \tensor{\cK}{_\nu_\mu^\s}\tensor{T}{^\nu_\r_\s}
   + \tensor{T}{_\nu^{\s}_\mu}\tensor{T}{_\s_\r^\nu}  \ .
\label{torsion-eq-nonlin-coord-LC}
\end{align}
It could be expressed in terms of the torsion  only,
but this does not lead to a simpler expression.
Together with the equation \eq{Einstein-eq-vac} for the Ricci tensor and the Bianchi identity below, 
this provides a closed system of equations 
for the metric and the torsion.


\subsection{Bianchi identity}

There is  a Bianchi-type identity for the torsion, which results
from the Jacobi identity 
\begin{align}
 \{Z_{\dot\g},\hat\Theta_{{\dot\a}{\dot\b}}\} + ({\rm cycl}) =0
\end{align}
where $({\rm cycl})$ indicates cyclic permutations in ${\dot\a},{\dot\b},{\dot\g}$.
Its coordinate version is
\begin{align}
 0 &= \{\{Z_{\dot\g},\hat\Theta_{{\dot\a}{\dot\b}}\},y^\mu\} + ({\rm cycl}) \nn\\
  &= - \{\{\hat\Theta_{{\dot\a}{\dot\b}},y^\mu\},Z_{\dot\g}\} - \{\{y^\mu,Z_{\dot\g}\},\hat\Theta_{{\dot\a}{\dot\b}}\} + ({\rm cycl}) \nn\\
  &=  \{Z_{\dot\g},\tensor{T}{_{\dot\a}_{\dot\b}^\mu}\} 
   + \{\tensor{E}{_{\dot\g}^\mu},\hat\Theta_{{\dot\a}{\dot\b}}\} + ({\rm cycl})  \nn\\
  &\sim  \nabla_{\dot\g} \tensor{T}{_{\dot\a}_{\dot\b}^\mu} 
     - \tensor{\Gamma}{_{\dot\g}_\s^\mu} \tensor{T}{_{\dot\a}_{\dot\b}^\s}
  - \tensor{T}{_{\dot\a}_{\dot\b}^\nu} \del_\nu \tensor{E}{_{\dot\g}^\mu} + ({\rm cycl})  \nn\\
  &=  \nabla_{\dot\g} \tensor{T}{_{\dot\a}_{\dot\b}^\mu} 
   -  \tensor{T}{_{\dot\a}_{\dot\b}^\s}\tensor{\Gamma}{_{\dot\g}_\s^\mu}
   + \tensor{T}{_{\dot\a}_{\dot\b}^\nu} \tensor{\Gamma}{_{\nu}_{\dot\g}^\mu}  + ({\rm cycl})  \nn\\
  &=  \nabla_{\dot\g} \tensor{T}{_{\dot\a}_{\dot\b}^\mu} 
    + \tensor{T}{_{\dot\a}_{\dot\b}^\nu} \tensor{T}{_{\nu}_{\dot\g}^\mu}  + ({\rm cycl}) 
\end{align}
using again \eq{poisson-approx-x},
or equivalently 
\begin{align}
\boxed{\
 0 =  \nabla_\s \tensor{T}{_\l_\r^\mu}
 + \nabla_\l \tensor{T}{_\r_\s^\mu}  + \nabla_\r \tensor{T}{_\s_\l^\mu }
  + \tensor{T}{_\l_\r^\nu} \tensor{T}{_\nu_\s^\mu}
   + \tensor{T}{_\r_\s^\nu} \tensor{T}{_\nu_\l^\mu} 
   + \tensor{T}{_\s_\l^\nu} \tensor{T}{_\nu_\r^\mu}
 \ }
 \label{Bianchi-full}
\end{align}
which is cyclic in $\l,\r,\s$.
Together with \eq{torsion-eq-nonlin-coord} we obtained an analog of the Yang-Mills equations.
Contracting $\s\mu$, this gives 
\begin{align}
 0 &=  \nabla_\s \tensor{T}{_\l_\r^\s}  + \nabla_\l \tensor{T}{_\r_\s^\s }
 - \nabla_\r\tensor{T}{_\l_\s^\s} 
  -2 \r^{-1} \del_\nu \r  \tensor{T}{_\l_\r^\nu} 
\end{align}
using Lemma \ref{lem:contraction-torsion}. 
The middle terms can be rewritten using
\begin{align}
 (\nabla_\mu\del_\nu - \nabla_\nu\del_\mu)\r
 &= -\tensor{\Gamma}{_\mu_\nu^\s} \del_\s\phi + \tensor{\Gamma}{_\nu_\mu^\s} \del_\s\r
  = - \tensor{T}{_\mu_\nu^\s} \del_\s\r
\end{align}
and we obtain the identity
\begin{align}
\boxed{ \
\nabla_\s \tensor{T}{_\l_\r^\s}  = 0 \ .
 \ }
   \label{Bianchi-contract}
\end{align}
Contracting \eq{Bianchi-full} with $\g^{\s\l}$ does not give any non-trivial relation.

\subsection{Vacuum equation for the Ricci tensor}
\label{sec:Ricci-eq}

Now we compute the Ricci tensor for the Levi-Civita connection associated with the effective metric $G^{\mu\nu}$. 
This is achieved by expressing the Riemann tensor in terms of  the torsion.
We start from
\begin{align}
\tensor{\cR}{_\mu_\nu^\l_\s} 
  &= \del_\mu \tensor{\G}{^{(G)}_\nu_\s^\l} - \del_\nu \tensor{\G}{^{(G)}_\mu_\s^\l}
 + \tensor{\G}{^{(G)}_\mu_\r^\l} \tensor{\G}{^{(G)}_\nu_\s^\r}  
 - \tensor{\G}{^{(G)}_\nu_\r^\l} \tensor{\G}{^{(G)}_\mu_\s^\r}  \nn\\
 \cR_{\nu\s} &=\del_\mu \tensor{\G}{^{(G)}_\nu_\s^\mu} - \del_\nu \tensor{\G}{^{(G)}_\mu_\s^\mu}
 + \tensor{\G}{^{(G)}_\mu_\r^\mu} \tensor{\G}{^{(G)}_\nu_\s^\r}  
 - \tensor{\G}{^{(G)}_\nu_\r^\mu} \tensor{\G}{^{(G)}_\mu_\s^\r}  \ .
 \label{Riemann-tensor-gen}
\end{align}
In (Riemann) normal coordinates at $p\in\cM$, this simplifies using \eq{LC-contorsion-eff} as 
\begin{align}
\tensor{\cR}{_\mu_\nu^\l_\s} 
 &= \del_\mu (\tensor{\tilde\G}{_\nu_\s^\l} - \tensor{\cK}{_\nu_\s^\l}) 
   - \del_\nu (\tensor{\tilde\G}{_\mu_\s^\l} - \tensor{\cK}{_\mu_\s^\l}) \nn\\
 \cR_{\nu\s} &= \del_\mu (\tensor{\tilde\G}{_\nu_\s^\mu} - \tensor{\cK}{_\nu_\s^\mu}) 
              - \del_\nu (\tensor{\tilde\G}{_\mu_\s^\mu} - \tensor{\cK}{_\mu_\s^\mu}) \ .
\end{align}
Now we use the fact that the curvature of the Weitzenb\"ock connection vanishes,
\begin{align}
 0
 &= \del_\mu \tensor{\tilde\G}{_\nu_\s^\l} - \del_\nu\tensor{\tilde\G}{_\mu_\s^\l}
 +  \tensor{\tilde\G}{_\mu_\r^\l}  \tensor{\tilde\G}{_\nu_\s^\r}  
 -  \tensor{\tilde\G}{_\nu_\r^\l}  \tensor{\tilde\G}{_\mu_\s^\r}  \nn\\
 0 &=\del_\mu  \tensor{\tilde\G}{_\nu_\s^\mu} - \del_\nu  \tensor{\tilde\G}{_\mu_\s^\mu}
 +  \tensor{\tilde\G}{_\mu_\r^\mu}  \tensor{\tilde\G}{_\nu_\s^\r}  
 -  \tensor{\tilde\G}{_\nu_\r^\mu}  \tensor{\tilde\G}{_\mu_\s^\r}  
\end{align}
and  obtain the tensorial equations
\begin{align}
\tensor{\cR}{_\mu_\nu^\l_\s} 
 &= -\nabla^{(G)}_\mu \tensor{\cK}{_\nu_\s^\l}
   + \nabla^{(G)}_\nu  \tensor{\cK}{_\mu_\s^\l} 
   - \tensor{\cK}{_\mu_\r^\l} \tensor{\cK}{_\nu_\s^\r}  
   + \tensor{\cK}{_\nu_\r^\l} \tensor{\cK}{_\mu_\s^\r} \nn\\
 \cR_{\nu\s} &= -\nabla^{(G)}_\mu \tensor{\cK}{_\nu_\s^\mu}
                + \nabla^{(G)}_\nu \tensor{\cK}{_\mu_\s^\mu}  
        - \tensor{\cK}{_\mu_\r^\mu} \tensor{\cK}{_\nu_\s^\r}  
        + \tensor{\cK}{_\nu_\r^\mu} \tensor{\cK}{_\mu_\s^\r}       
  \label{riemann-K}
\end{align}
 using \eq{Weitzenbock-Torsion-RiemanNC}.
It remains to evaluate the derivative terms of the contorsion.
The first term can be evaluated using \eq{Levi-contorsion-full}, which gives
 \begin{align}
  \nabla^{(G)}_\mu\tensor{\cK}{_{\nu}_{\s}^\mu} 
   &= \nabla^{(G)}_\mu\tensor{K}{_{\nu}_\s^\mu}
   - G_{\nu\s} \big(\r^{-1}\Box_G \r + \r^{-2}\del\r\cdot\del\r\big)
   - \r^{-1}\nabla^{(G)}_\nu\del_\s\r 
    + \r^{-2}\del_\nu \r \del_\s\r  \ 
\end{align}
where
\begin{align}
 \del\r\cdot\del\r := G^{\mu\s}\del_\mu\r\del_\s \r \ .
\end{align}
Further, \eq{torsion-contorsion-contract} gives 
\begin{align}
 \nabla^{(G)}_\nu \tensor{\cK}{_\mu_\s^\mu} &= - \nabla^{(G)}_\nu(\r^{-1} \del_\s\rho )
  =  \r^{-2}\del_\nu\rho \del_\s\rho -\r^{-1} \nabla^{(G)}_\nu\del_\s\rho 
\end{align}
so that 
\begin{align}
 - \nabla^{(G)}_\mu\tensor{\cK}{_{\nu}_{\s}^\mu} + \nabla^{(G)}_\nu \tensor{\cK}{_\mu_\s^\mu}
  &= - \nabla^{(G)}_\mu\tensor{K}{_{\nu}_\s^\mu}
   + G_{\nu\s} \big(\r^{-1}\Box_G \r + \r^{-2}\del\r\cdot\del\r\big) \ .
\end{align}
The first term can be evaluated as
\begin{align}
 \nabla^{(G)}_\mu\tensor{K}{_{\nu}_\s^\mu} 
  &= \nabla_\mu\tensor{K}{_{\nu}_\s^\mu} 
  + \tensor{\cK}{_\mu_\nu^\r}  \tensor{K}{_\r_\s^\mu}  
  +  \tensor{\cK}{_\mu_\s^\r} \tensor{K}{_{\nu}_\r^\mu}  
\end{align}
using \eq{nabla-nablaLC-rel}, \eq{Levi-contorsion-full} and \eq{torsion-contorsion-contract}.
This gives after some straightforward algebra using \eq{tilde-K-K}
\begin{align}
  \cR_{\nu\s}
&=  - \nabla_\mu\tensor{K}{_{\nu}_\s^\mu} 
   - \tensor{K}{_\mu_\nu^\r} \tensor{K}{_\r_\s^\mu} 
   + 2\r^{-2}\del_\s \r \del_\nu \r   
   + G_{\nu\s} \big(\r^{-1}\Box_G \r + \r^{-2}\del\r\cdot\del\r\big) \ .
 \label{Ricci-aux1}
\end{align}
So far this is an identity.
We now replace the contorsion with the torsion using \eq{Levi-contorsion-basic}
and use the  Bianchi identity \eq{Bianchi-contract} as well as  the equation of motion \eq{torsion-eq-nonlin-coord} for the torsion.
This gives
\begin{align}
 \nabla_\mu \tensor{K}{_\nu_\s^\mu}
  &=  \frac 12 \nabla_\mu \big(\tensor{T}{_\nu_\s^\mu} + \tensor{T}{^\mu_\nu_\s}  
  - \tensor{T}{_\s^\mu_\nu} \big)  
  =  \frac 12 \nabla_\mu \big( \tensor{T}{^\mu_\nu_\s} + \tensor{T}{^\mu_\s_\nu} \big) \nn\\
  &= -  \frac 12 (\tensor{T}{^\r^{\eta}_\s}\tensor{T}{_\eta_\nu_\r}
     + \tensor{T}{^\r^{\eta}_\nu}\tensor{T}{_\eta_\s_\r})
     + m^2 \g_{\nu\s} \ ,
\end{align}
and we obtain the desired equation for the Ricci tensor of the effective metric in vacuum
\begin{align}
  \cR_{\nu\s}[G] &= - \frac 12 (\tensor{T}{_\r^{\mu}_\s}\tensor{T}{_\nu_\mu^\r} 
  + \tensor{T}{_\r^{\mu}_\nu}\tensor{T}{_\s_\mu^\r})
  - \tensor{K}{_\mu^\r_\nu}\tensor{K}{_\r^\mu_\s}
  + 2\r^{-2}\del_\nu\r \del_\s\r \nn\\
 &\quad  + G_{\nu\s} \big(-\r^{-2} m^2 + \r^{-1}\Box_G \r + G^{\mu\nu} \r^{-2}\del_\mu\r\del_\nu \r \big) \ .
   \label{Ricci-vacuum-1}
\end{align}
This agrees precisely with the result \eq{Ricci-alternative} obtained in a more pedestrian way.
The last term can be rewritten in terms of the torsion using the contracted equations of motion \eq{Box-r-onshell},
\begin{align}
  \cR_{\nu\s} &= - \frac 12 (\tensor{T}{_\r^{\mu}_\s}\tensor{T}{_\nu_\mu^\r} 
  + \tensor{T}{_\r^{\mu}_\nu}\tensor{T}{_\s_\mu^\r})
  - \tensor{K}{_\mu^\r_\nu}\tensor{K}{_\r^\mu_\s}
  + 2\r^{-2}\del_\nu\r \del_\s\r \nn\\
&\quad  + G_{\nu\s} \big(\r^{-2} m^2 - \frac 12 \tensor{T}{_\nu^{\s}_\mu}\tensor{T}{_\s_\r^\nu} G^{\mu\r} \big) \ .
 \label{Ricci-vacuum-2}
\end{align}
This is an algebraic equation for the Ricci tensor in terms of torsion\footnote{We can consider 
$\r^{-1}\del_\nu\r = \tensor{\cK}{_\mu^\mu_\nu}$ as part of the torsion \eq{torsion-contorsion-contract}.}.
The Ricci scalar is obtained by contracting  with $G^{\nu\s}$, 
\begin{align}
\cR[G] 
   &= -\tensor{T}{_\r^{\mu}^\nu}\tensor{T}{_\mu_\nu^\r} 
  - \tensor{K}{_\mu^\r_\nu}\tensor{K}{_\r^\mu^\nu} 
  +2 \r^{-2} \del\r\cdot\del\r + 4 \r^{-2} m^2  \nn\\
  &=  - \frac 12 \tensor{T}{^\s^\mu^\r}\tensor{T}{_\mu_\r_\s} 
 - \frac 14 \tensor{T}{^\mu^\s^\r}\tensor{T}{_\mu_\s_\r} 
  +2 \r^{-2} \big(\del\r\cdot\del\r + 2 m^2 \big)
\end{align}
using \eq{scalar-contraction-T} in the last step. 
Hence the present vacuum equations can be written as Einstein equations in the form
\begin{align}
\boxed{\ 
  \cG_{\mu\nu} =  \cR_{\mu\nu}  - \frac 12 G_{\mu\nu} \cR  = 8 \pi {\bf T}_{\mu\nu} 
\ }
\label{Einstein-eq-vac}
\end{align}
with an effective energy-momentum tensor due to the torsion,
\begin{align}
 8\pi {\bf T}_{\mu\nu} &= - \frac 12 (\tensor{T}{_\r^{\d}_\nu}\tensor{T}{_\mu_\d^\r} 
  + \tensor{T}{_\r^{\d}_\mu}\tensor{T}{_\nu_\d^\r})
  - \tensor{K}{_\d^\r_\mu}\tensor{K}{_\r^\d_\nu}
  + 2\r^{-2}\del_\mu\r \del_\nu\r \nn\\
 &\quad + G_{\mu\nu} \big(
 - \frac 14 \tensor{T}{^\s^\d^\r}\tensor{T}{_\d_\r_\s} 
 + \frac 18 \tensor{T}{^\d^\s^\r}\tensor{T}{_\d_\s_\r} 
  - \r^{-2} \del\r\cdot\del\r - 3R^{-2}\r^{-2} \big) 
  \label{em-tensor-effective}
\end{align}
recalling that $m^2 = 3R^{-2}$.
This is verified for the cosmic background in section \eq{sec:cosm-BG}. 
The conservation law $\nabla_{(G)}^\nu {\bf T}_{\nu\mu} = 0$ is guaranteed  
at least in vacuum because the rhs simply computes the Einstein tensor, which is conserved.

The above equations \eq{torsion-eq-nonlin-coord}, \eq{Bianchi-full} and \eq{Einstein-eq-vac} for the torsion and Ricci tensor 
provide a closed system 
of non-linear equations which govern the emergent gravity  on the present background.
The fact that the quantities are $\hs$-valued makes them rich and 
rather complicated\footnote{Recall that $\hs$ is a commutative in the 
semi-classical limit here, hence there is no ordering ambiguity.}.
Although the present derivation is restricted 
to the asymptotic regime for the $\hs$ sector, the equations 
are exact for the lowest components in $\cC^0$, as verified for the cosmological
background. Using covariance under higher-spin 
gauge transformations, it should be possible to largely transform away the $\hs$-components 
in many situations,
and the exact equations for the $\cC^0$ component should 
be accessible to analytic investigation.

\subsection{Discussion and further considerations}
\label{sec:discussion}

The crucial point of the above result is that ${\bf T}_{\mu\nu} = O(T T)$ 
is  quadratic in the torsion,
as appropriate for an energy-momentum tensor. 
In contrast, the Riemann tensor \eq{riemann-K} contains a (derivative) term which is linear 
in the torsion,
hence any non-trivial geometry has non-trivial torsion. 
However, this  linear contribution vanishes on-shell for the Ricci tensor, so
that vacuum geometries are Ricci-flat up to higher-order
(non-linear) contributions. This means that the present theory 
is  a serious candidate for gravity, and deviations from GR
(at least in vacuum) are restricted to the non-linear regime.

It remains to quantify the deviations of the present theory from GR. 
One regime where deviations will surely arise is the strong gravity regime, 
where the Riemann tensor is large in some sense. That is the regime where 
deviations from  GR typically arise in  alternative approaches to gravity such as  string theory.
However there is another regime where torsion may lead to significant modifications here, 
namely for very large, massive objects as discussed in the next section. 
This new mechanism arises from  the self-coupling of torsion due to \eq{torsion-eq-nonlin-coord}, 
and it might mimic the presence of dark matter as discussed below.

%
%
%
%

\paragraph{Torsion as a ``dark matter''.}

Consider first the linearization of the torsion 
\begin{align}
\tensor{T}{_{\r}_\nu^\mu}  = \tensor{\bar T}{_{\r}_\nu^\mu}  + \tensor{t}{_{\r}_\nu^\mu} 
\end{align}
around the cosmic background torsion $\bar T = O(\frac{1}{a(t)})$  \eq{torsion-BG-explicit}.
Since the  contributions from the background cancel, 
the eom \eq{torsion-eq-nonlin-coord-LC} takes the following schematic form
\begin{align}
 -\nabla_\nu^{(G)} \tensor{t}{^\nu_\r_\mu} 
  &\sim  - \frac{3}{a(t)^2} \d G_{\r\mu} 
  + \frac{1}{a(t)} (t)_{\r\mu}
  + (t t)_{\r\mu} \ .
\label{torsion-eq-lin-coord}
\end{align}
The constant and linear terms on the rhs are suppressed by the 
cosmic scale factor $\frac{1}{a(t)}$.
However, since the torsion is smaller than the cosmic background in the linearized regime, it will
not be  very significant physically. Therefore we need to consider the non-linear regime,
\begin{align}
 \nabla_\nu^{(G)} \tensor{t}{^\nu_\r_\mu}  &\sim -(t t)_{\r\mu} \ .
\label{torsion-eq-nonlin}
\end{align}
To get a rough qualitative idea what this could mean, 
consider the simplified radial equation
\begin{align}
 t' = - t^2
\end{align}
for $t = t(r)$, where $r$ is the distance to the center of some massive object.
This should give a reasonable picture for the radial dependence of the torsion tensor.
That equation has the general solution 
\begin{align}
 t(r) = \frac{1}{r+c}
 \label{torsion-radial-qual}
\end{align}
for some constant $c$. 
This leads to an effective energy-momentum tensor 
$8\pi {\bf T}_{\mu\nu} \sim \frac 1{(r+c)^2}$  \eq{em-tensor-effective} as a source of the  Einstein tensor,
which would behave like a dark matter halo with density profile
$\r_{\rm DM}(r)\sim \frac 1{(r+c)^2}$ corresponding to a total mass
\begin{align}
 M_{DM}(r) \sim  c + r  + O(\frac 1r) \ ,
 \qquad v_{\rm rot}(r) \sim {\rm const}
\end{align}
leading to a rotation velocity $v_{\rm rot}(r)$ which is roughly independent 
of the distance $r$ to the (galactic) center.
This is indeed what is typically observed. 
The scale parameter $c$ should be determined by continuity in a refined treatment, and 
 it is presumably set by the ``size'' or mass  of the object.
 At very large distances, the torsion \eq{torsion-radial-qual} will merge to that of the cosmic background,
 leading to a natural  cutoff for the effect.
 For small masses or objects, there will not be sufficient space for 
 \eq{torsion-radial-qual} to rise significantly above the background, 
 so that the effect should be significant only 
 for very large objects such as galaxies.

Needless to say that this crude qualitative consideration needs to be considerably refined before  quantitative 
statements can be made, and  the coupling to matter needs to be understood 
 and taken into account properly. 
 In any case, it is intriguing to 
obtain a qualitatively reasonable first estimate, and it is also encouraging that the 
present mechanism based on dynamical torsion is sufficiently rich that 
different types of behavior might be produced.

\paragraph{Coupling to matter.}

This paper is restricted to the vacuum geometry of the model. 
To properly talk about gravity we should of course take matter into account, which is 
 indeed an intrinsic part of the IKKT matrix model.
It is clear that the kinematics of matter is properly governed by the metric;
this is how the metric was identified\footnote{As discussed in \cite{Steinacker:2010rh}, 
the Dirac operator for fermions in the IKKT model is also based on the 
effective frame as it should. This should be elaborated 
in more detail elsewhere.}.
However, the non-trivial question is  how matter 
acts as a source for torsion and the Ricci tensor.
It is not clear what is the best way to work this out, and we
postpone this question to future work. However, a few comments can be made at this point:

First of all, due to the (higher-spin) covariance of the theory
it is highly plausible that the energy-momentum tensor for matter will arise on the rhs of the 
Einstein equations. However there will also be higher-derivative terms, and
the question is if the standard contribution dominates the
higher-derivative contributions; see also the related discussion in \cite{Sperling:2019xar}.
Covariance 
will strongly restrict the possible  terms, and it is certainly plausible that the 
energy-momentum tensor will dominate.
Due to the presence of several scales on the background
this must be studied in detail, in order to identify the effective Newton constant.
Since quantum effects are typically significant in this context, 
this may not be a trivial task.

Similarly, the effect of matter on torsion must be understood. This is expected to be small since 
for bosonic matter there should not source torsion at all (as only the metric appears in the kinetic term),
and for fermions the effect is expected to be small as well, due to 
the supersymmetry of the underlying matrix model.

\section{Conclusion and outlook}

The present paper provides a  tensorial description of the vacuum sector of the
effective gravity which arises on a  solution of the IKKT matrix models
found in \cite{Sperling:2019xar}, interpreted as FLRW space-time.
The noncommutative Yang-Mills gauge theory is cast into a geometric form which makes the  
gauge invariance manifest, in terms of a higher-spin 
generalization of volume-preserving diffeomorphisms. 
The crucial concept turns out to be torsion, or rather a higher-spin generalization of torsion,
which encodes the quantum structure of space-time and provides its semi-classical shadow.

Torsion turns out to be an independent and additional physical quantity besides the metric, 
and the Einstein equations for vacuum are modified through an effective
effective energy-momentum tensor due to torsion.
A non-linear equation for torsion is obtained, which encodes the underlying 
Yang-Mills-type equations of motion of the matrix model. 
This equation is exact for the standard (lowest-spin) tensorial components,
but obtained only in an intermediate (``asymptotic'') regime for the higher-spin components.

Moreover, we have argued that at least in the vacuum sector,
the modification of GR due to torsion should be small
except for very large objects such as galaxies, and on cosmic scales.
The point is that torsion enters  quadratically in the 
effective energy-momentum tensor, while it is governed itself by some non-linear PDE.
A  rough qualitative estimate suggests that it could indeed behave like an
apparent dark matter halo around galaxies. 
In principle,
the equations obtained in this paper should allow to obtain a quantitative 
description for this effect which can be tested.

The theory of gravity obtained in this way is governed by an action which is very different
from the Einstein-Hilbert action. 
Unlike in the teleparallel formulation of general relativity,
there is no  way to rewrite the matrix model as local action
in terms of the torsion, metric and frame.
It is precisely this non-standard non-geometric origin which makes the 
present approach to gravity so interesting and potentially far-reaching.

As in all higher-spin theories, an important question is if the theory reduces
in some suitable regime
to an ordinary (modified) gravity theory with spin $\leq 2$.  
The presence of both an IR scale (given by the cosmic curvature scale) 
{\em and} a UV scale (given by the scale of noncommutativity \eq{L-NC-def}) 
leads to the hope 
that this may be the case in the present framework. 
However,  this  needs to be studied in future work.

Moreover, the present paper is limited to the vacuum sector of gravity.
This restriction is only due to technical reasons, and
obviously needs to be removed in future work. In principle, 
everything should follow from the underlying matrix model,
and matter will certainly influence
the geometry in some way consistent with the (generalized) covariance. 
Moreover,
the framework of matrix models allows to make sense of the path integral.
In particular for the maximally supersymmetric IKKT model, one may reasonably hope that
the  present (semi-) classical treatment is not too far from 
the full quantum theory. 
The extra structure required for an interesting  matter sector 
can naturally arise from fuzzy extra dimensions realized by the 
extra 6 bosonic matrices in the model, as discussed e.g. in 
\cite{Chatzistavrakidis:2011gs,Aschieri:2006uw,Sperling:2018hys,Aoki:2014cya},
see also \cite{Hatakeyama:2019jyw}.

Finally, a general message  is that we ought to be   
cautious in extrapolating general relativity to regimes where it was not directly tested. 
The present theory may reproduce GR quite well in intermediate regimes
and it has a healthy linear excitation spectrum without
ghosts \cite{Steinacker:2019awe}, 
but it  certainly differs significantly on very long scales. Thus
the correct theory of gravity may be far richer than GR, and
the  puzzles of dark matter,  dark energy and the cosmological constant
may well  be evidence supporting such a picture.

\paragraph{Acknowledgments.}

I would like to thank Loriano Bonora and Stefan Fredenhagen for useful discussions,
and Marcus Sperling for collaboration in the early stages of this project.
This work was supported by the Austrian Science Fund (FWF) grant P32086.

\section{Technical supplements}

\subsection{Volume-preserving diffeomorphisms}
\label{sec:vol-pres-diffeo}

Consider the vector field \eq{diffeo-Lambda} associated a gauge transformation 
generated by  $\L^{(s)}\in\cC^s$
\begin{align}
 \xi^\mu := \{\L^{(s)},x^\mu\}  = \xi^\mu_+ + \xi^\mu_-  \qquad  \in \cC^{s+1} \oplus \cC^{s-1}
 \label{rank-1-field-def-2}
\end{align}
Using the basic identities (A.35) in \cite{Sperling:2019xar},
it is easy to see that 
these components  satisfy the following constraint 
\begin{align}
 \{t_{\mu}, \xi^\mu_+  \} &= -\frac{s+3}{R^2\sinh(\eta)}  x_{\mu}\xi^\mu_+ , \qquad
 \{t_{\mu}, \xi^\mu_- \} = -\frac{-s+2}{R^2\sinh(\eta)}  x_{\mu}\xi^\mu_- 
\end{align}
or equivalently
\begin{align}
  \bar\nabla_\mu(\b^{s+5} \xi^\mu_+) = 0 = \bar\nabla_\mu(\b^{-s+4} \xi^\mu_-), 
  \qquad \b = \frac 1{\sinh(\eta)} \ .
\end{align}
Here $\bar\nabla$ is the Levi-Civita derivative w.r.t. 
the cosmic background metric \cite{Steinacker:2019dii}.
In this sense, $\xi^\mu$
can be interpreted as volume-preserving higher spin diffeomorphism.

\subsection{Calculations for the Poisson bracket}
\label{sec:poisson-deriv-formulas}

We claim that
the following formula realizes the ansatz \eq{poissonbracket-Ansatz-coords} 
for the Poisson brackets:
\begin{align}
 \cosh^2(\eta) \{f,g\} 
 &=  \Big(\sinh(\eta)\{t_\mu,f\}
  - \frac{1}{r^2 R^2} \{f,x_\nu\}\big(\theta^{\nu\mu} 
     + r^2\sinh^{-1}(\eta)(t^\nu x^\mu - x^\nu t^\mu)\big)\Big)\{x_\mu,g\} \nn\\
 &\quad + \Big(\sinh(\eta)\{f,x^\mu\} + \{f,t_\nu\}\theta^{\nu\mu}\Big)\{t_\mu,g\} \ .
 \label{Poisson-full-formula}    
  %
\end{align}
The two terms look different due to the ambiguity in \eq{poissonbracket-Ansatz-coords},
and different forms can be obtained using
\begin{align}
  \{f,t_\nu\}(t^\nu x^\mu - x^\nu t^\mu)\{t_\mu,g\} 
   &= \{f,t_\nu\}t^\nu x^\mu \{t_\mu,g\} - \{f,t_\nu\}x^\nu t^\mu\{t_\mu,g\} \nn\\
   &= \frac{1}{r^2 R^2}\{f,x_\nu\}(x^\nu t^\mu - t^\nu x^\mu)\{x_\mu,g\} \ .
\end{align}
This leads to the following closed formulas for the derivatives $\eth$ and $\cD$
in \eq{derivations-coords}:
\begin{align}
\cosh^2(\eta) \eth_\mu f &=\sinh(\eta)\{t_\mu,f\}
  - \frac{1}{r^2 R^2} \big(\theta^{\nu\mu} 
     + r^2\sinh^{-1}(\eta)(t^\nu x^\mu - x^\nu t^\mu)\big)\{f,x_\nu\} \nn\\
 \cosh^2(\eta)\cD^\mu(f) &= \sinh(\eta)\{f,x^\mu\} + \theta^{\nu\mu} \{f,t_\nu\} \ .
 \label{eth-del-explicit}
\end{align}
For the generators, this gives
\begin{align}
 \cosh^2(\eta) \cD^\mu t_\r &= \sinh(\eta)\{t_\r,x^\mu\} + \{t_\r,t_\nu\}\theta^{\nu\mu}  \nn\\
  &= \sinh^2(\eta)\d_\r^\mu -\frac{1}{r^2 R^2}\theta^{\r\nu}\theta^{\nu\mu}  \nn\\
  &= \cosh^2(\eta) P_\perp^{\r\mu}  - r^2 t^\r t^\mu , \qquad P_\perp^{\r\mu} = \eta^{\r\mu} -\frac{1}{x_\nu x^\nu}x^\r x^\mu  \nn\\
  %
 \cosh^2(\eta) \cD^\mu x_\r &= \sinh(\eta)\theta^{\r\mu} -\sinh(\eta) \theta^{\r\mu} = 0 \nn\\
  \cosh^2(\eta)\eth^\mu x_\r &= \sinh(\eta)\{t_\mu,x_\r\}
  - \frac{1}{r^2 R^2} \{x_\r,x_\nu\}\big(\theta^{\nu\mu} 
     + r^2\sinh^{-1}(\eta)(t^\nu x^\mu - x^\nu t^\mu)\big)  \nn\\
 &= \sinh^2(\eta)\d_\mu^\r
  - \frac{1}{r^2 R^2} \big(\theta^{\r\nu}\theta^{\nu\mu} 
     + r^2\sinh^{-1}(\eta)(\theta^{\r\nu}t^\nu x^\mu - \theta^{\r\nu}x^\nu t^\mu)\big)    \nn\\
 &= \cosh^2(\eta) \eta^{\r\nu}   \nn\\
 \cosh^2(\eta)\eth^\mu t_\r  &= \sinh(\eta)\{t_\mu,t_\r\}
  - \frac{1}{r^2 R^2} \{t_\r,x_\nu\}\big(\theta^{\nu\mu} 
     + r^2\sinh^{-1}(\eta)(t^\nu x^\mu - x^\nu t^\mu)\big)  \nn\\
 &= -\frac{1}{r^2 R^2}\sinh(\eta)\theta^{\mu\r}
  - \frac{1}{r^2 R^2} \sinh(\eta)\big(\theta^{\r\mu} 
     + r^2\sinh^{-1}(\eta)(t^\r x^\mu - x^\r t^\mu)\big) \nn\\
 &= - \frac{1}{R^2}(t^\r x^\mu - x^\r t^\mu)
\end{align}
One can then verify  \eq{Poisson-full-formula} and \eq{poisson-bracket-realiz} explicitly.

\paragraph{Locally rescaled generators.}

To get a better intuition for the Poisson brackets, let us
define  adapted momentum generators for some cosmic time scale $\eta_0$ near some observer,
\begin{align}
 p_\a := \b_0 t_\a, \qquad  p_\a p^\a \approx r^{-2} 
\end{align}
where $\b_0 = \cosh^{-1}(\eta_0) \ll 1$.
They satisfy approximately canonical commutation relations
\begin{align}
 \{p_\a,  x^\mu\} &= \b_0 \sinh(\eta) \d_\a^\mu  \ \approx \ \d_\a^\mu \ .
\end{align}
The noncommutativity of the remaining generators is obtained from \eq{CR-explicit-ref}, 
\begin{align}
 \{p^0, p^j\} &= - \frac{\b_0}R  p^j  \nn\\
 \{p^i, p^j\} &= \frac{\b_0^2}{R}  \epsilon^{ijk} p^k   \nn\\
 \{x^0, x^j\} &= r^2 R \b_0^{-1} p^j \nn\\
 \{x^i, x^j\} &= r^2 R \varepsilon^{ijk} p^k
\end{align}
at the reference point.
Hence $\{p^\mu, p^\nu\} = O(\frac 1{L_{NC}^2})$
while $\{x^\mu, x^\nu\} = O(L_{NC}^2)$.

\subsection{Proof of Lemma \ref{lem:Box}}
\label{sec:proof-Lemma-Box}

Consider first the following contraction
\begin{align}
 \tensor{\G}{_{\dot\g}_\mu^\mu} &= 
  \tensor{\G}{_{\dot\g}_{\dot\b}^\mu}  \tensor{E}{^{\dot\b}_\mu} 
   = -\{Z_{\dot\g},\tensor{E}{_{\dot\b}^\mu}\} \tensor{E}{^{\dot\b}_\mu }
   = -\{Z_{\dot\g},\tensor{E}{_{\dot\b}^\mu}\} \tensor{E}{_{\dot\b}^\nu} {\g}_{\nu\mu} \nn\\
  &= -\{Z_{\dot\g},\tensor{E}{_{\dot\b}^\mu} \tensor{E}{_{\dot\b}^\nu}\}  {\g}_{\nu\mu} 
     +\{Z_{\dot\g}, \tensor{E}{_{\dot\b}^\nu}\} \tensor{E}{_{\dot\b}^\mu} {\g}_{\nu\mu} \nn\\
  &= -\{Z_{\dot\g},{\g}^{\mu\nu}\} {\g}_{\nu\mu} + \{Z_{\dot\g}, \tensor{E}{_{\dot\b}^\nu}\} \tensor{E}{_{\dot\b}^\mu} {\g}_{\nu\mu} \nn\\
  &= -\{Z_{\dot\g},{\g}^{\mu\nu}\} {\g}_{\nu\mu} + \{Z_{\dot\g}, \tensor{E}{_{\dot\b}^\nu}\} \tensor{E}{^{\dot\b}_{\nu}} \nn\\
  &= -\{Z_{\dot\g},{\g}^{\mu\nu}\} {\g}_{\nu\mu} - \tensor{\G}{_{\dot\g}_\mu^\mu} 
\end{align}
This gives \eq{Gamma-det-identity}
\begin{align}
 2 \tensor{\G}{_{\dot\g}_\mu^\mu} &= -\{Z_{\dot\g},\g^{\mu\nu}\} \g_{\nu\mu}
  = -\tensor{E}{_{\dot\g}^\r}\, \g_{\nu\mu} \del_\r\g^{\mu\nu} 
   = -\tensor{E}{_{\dot\g}^\r}\, \del_\r\ln(|\gamma^{\mu\nu}|)  \nn\\
   \tensor{\G}{_{\r}_\mu^\mu} &=  \del_\r\ln(\sqrt{|\gamma_{\mu\nu}|}) \ ,
\end{align}
which has the same form as for the Christoffel symbols for the Levi-Civita connection.
 Now consider the d'Alembertian $\Box \phi$:
 \begin{align}
  \{Z_{\dot\a},\{Z^{\dot\a},\phi\}\} &= \tensor{E}{_{\dot\a}^\mu}\del_\mu(\tensor{E}{^{\dot\a}^\nu}\del_\nu\phi)  \nn\\
  &= \del_\mu(\tensor{E}{_{\dot\a}^\mu}\tensor{E}{^{\dot\a}^\nu}\del_\nu\phi)  
    - \del_\mu(\tensor{E}{_{\dot\a}^\mu})\tensor{E}{^{\dot\a}^\nu}\del_\nu\phi  \nn\\
&= \del_\mu(\g^{\mu\nu}\del_\nu\phi)  
    + \tensor{\G}{_\r_{\dot\a}^\r} \tensor{E}{^{\dot\a}^\nu}\del_\nu\phi \nn\\
&= \del_\mu(\g^{\mu\nu}\del_\nu\phi)   + \tensor{\G}{_\mu^\nu^\mu} \del_\nu\phi  \ .
\label{Box-id-1}
 \end{align}    
Alternatively, we can proceed as      
\begin{align}
  \{Z_{\dot\a},\{Z^{\dot\a},\phi\}\}   
      &= \g^{\mu\nu}\del_\mu\del_\nu\phi + \tensor{E}{_{\dot\a}^\mu}(\del_\mu\tensor{E}{^{\dot\a}^\nu})\del_\nu\phi \nn\\
   &= \g^{\mu\nu}\del_\mu\del_\nu\phi - \tensor{E}{_{\dot\a}^\mu}\tensor{\G}{_\mu_\r^\nu} \tensor{E}{^{\dot\a}^\r}\del_\nu\phi \nn\\
   &= \g^{\mu\nu}\del_\mu\del_\nu\phi - \tensor{\G}{_\mu_\r^\nu} \g^{\r\mu}\del_\nu\phi \nn\\
   &= \g^{\mu\nu}\del_\mu\del_\nu\phi - \tensor{\G}{_\mu^\mu^\nu}\del_\nu\phi \ .
\label{Box-Poisson-2}
\end{align}
In particular, this gives \eq{Gamma-contract-id}
\begin{align}
 \Box y^\s &= \del_\mu\g^{\mu\s}
    + \tensor{\G}{_\r^\s^\r}  
   = -\tensor{\G}{_\mu^\mu^\s} \ .
\end{align}
As a check, the second line in \eq{LC-contorsion-1} reproduces the standard identity
in Riemannian geometry
\begin{align}
  \tensor{\G}{^{(\g)}_\mu^\mu^\s}
  &= \tensor{\G}{_\mu^\s^\mu} + \tensor{\G}{_\mu^\mu^\s} - \tensor{\G}{^\s^\mu_\mu} \nn\\
  &=  - \del_\mu\g^{\mu\s} +  \del^\s\ln(\sqrt{|\gamma^{\mu\nu}|})  
  = - \frac{1}{\sqrt{|\gamma_{\mu\nu}|}}\del_\mu\big(\sqrt{|\gamma_{\mu\nu}|}\g^{\mu\s}\big) \ .
\end{align}
To show \eq{Gamma-mu-t-mu-id}, 
we observe that  the basic frame can be written using \eq{reduced-bracket} as follows
\begin{align}
 \tensor{E}{_{\dot\b}^\mu} = \{Z_{\dot\b},y^\mu\} = - \theta^{\mu\s}\del_\s Z_{\dot\b}
\end{align}
Therefore
\begin{align}
 \del_\mu(\r_M\tensor{E}{_{\dot\b}^\mu}) = -\del_\mu(\r_M\theta^{\mu\s}\del_\s Z_{\dot\b})
  = -\del_\mu(\r_M\theta^{\mu\s})\del_\s Z_{\dot\b} = 0
\end{align}
using the identity $\del_\mu(\r_M\theta^{\mu\s}) = 0$ \eq{Poisson-conservation}
which follows from the Jacobi identity.
Together with $\del_\mu \tensor{E}{_{\dot\b}^\mu} = - \tensor{\G}{_\mu_{\dot\b}^\mu}$ 
this gives
\begin{align}
 \r_M \tensor{\G}{_\r_{\dot\b}^\r} &= \tensor{E}{_{\dot\b}^\mu}\del_\mu \r_M 
\end{align}
which gives \eq{Gamma-mu-t-mu-id}.
We can use this to continue \eq{Box-id-1} as follows 
\begin{align}
  \{Z_{\dot\a},\{Z^{\dot\a},\phi\}\} 
 &= \del_\mu(\g^{\mu\nu}\del_\nu\phi)   + \tensor{\G}{_\mu^\nu^\mu} \del_\nu\phi   \nn\\
 &= \del_\mu(\g^{\mu\nu}\del_\nu\phi)   + \r_M^{-1} \g^{\mu\nu}\del_\mu \r_M \del_\nu\phi   \nn\\
 &= \r_M^{-1}\del_\mu(\r_M\g^{\mu\nu}\del_\nu\phi)     \nn\\
 &= \frac{\r^2}{\sqrt{|G_{\mu\nu}|}}\del_\mu(\sqrt{|G_{\mu\nu}|}G^{\mu\r}\del_\nu\phi)     
\end{align}
using $\sqrt{|G_{\mu\nu}|}G^{\mu\r} = \r_M \g^{\mu\nu}$ 
and $\sqrt{|G_{\mu\nu}|} = \r_M \r^2$.
Now \eq{Box-Box-relation} follows.

Finally, contacting \eq{weizenbock-gamma-expl} with $\tensor{E}{^{\dot\a}^\nu}$
gives
\begin{align}
 \tensor{E}{^{\dot\a}^\nu}\del_\nu\tensor{E}{_{\dot\a}^\mu} \ 
 &= -\tensor{\Gamma}{_{\nu}_\r^\mu}\tensor{E}{_{\dot\a}^\r} \tensor{E}{^{\dot\a}^\nu} 
 = -\tensor{\Gamma}{_{\nu}_\r^\mu} \g^{\nu\r}
\end{align}
hence 
\begin{align}
 \{Z^{\dot\a},\{Z_{\dot\a},y^\mu\}\} &= -\tensor{\Gamma}{_{\nu}_\r^\mu} \g^{\nu\r} \nn\\
  \Box y^\mu &= \tensor{\Gamma}{_{\nu}_\r^\mu} \g^{\nu\r} \ .
\end{align}

\subsection{Weitzenb\"ock connection for the cosmic background}
\label{sec:cosm-BG}

For the cosmic background $\cM$ defined by 
$Z_{{\dot\a}} = T_{{\dot\a}} \sim t_{{\dot\a}}$ \eq{T-solution}, we have 
in  Cartesian coordinates $x^\mu$ 
\begin{align}
 \tensor{\bar E}{^{\dot\a}^\mu} &= \{t^{\dot\a},x^\mu\} = \sinh(\eta)\eta^{\dot\a \mu} ,&
 \qquad {\obar\g}^{\mu\nu} &= \sinh^2(\eta) \eta^{\mu\nu}  \nn\\
  \bar G^{\mu\nu} &=  \sinh^{-3}(\eta)\bar\g^{\mu\nu} = \sinh^{-1}(\eta) \eta^{\mu\nu}  ,&
  \qquad\quad \bar\r^2 &= \sinh^{3}(\eta) \ . 
   \label{eff-metric-G}
\end{align}
The Weitzenb\"ock connection is obtained as
\begin{align}
\tensor{\Gamma}{_{\nu}_\r^\mu} &= -\tensor{E}{^{\dot\a}_\r} \del_\nu\tensor{E}{_{\dot\a}^\mu} 
 = -\sinh^{-1}(\eta)\d^\mu_\r \del_\nu  \sinh(\eta)  \nn\\
 \tensor{\Gamma}{^{\nu}_\r_\mu} 
  &= \frac{1}{R^2\r^2}\t^\nu \bar G_{\mu\r} 
   = \ \frac{1}{R^2}\t^\nu \bar \g_{\mu\r}  , \qquad \t = x^\mu\del_\mu
  \label{Gamma-BG-explicit}
\end{align}
noting that $R^2\sinh(\eta)\del_\mu \sinh(\eta) = -\eta_{\mu\nu}\t^\nu$.
Here $\t = x^\mu\del_\mu$ is the $SO(3,1)$-invariant cosmic time-like vector field (in Cartesian coordinates)
which measures the cosmic scale,
\begin{align}
 \bar G_{\r\r'}\t^\r\t^{\r'} &= -R^2 \cosh^2(\eta)\sinh(\eta) = - a(t)^2 \ \sim - R^2 \bar\r^2
 \label{tau-length}
\end{align}
where $a(t)$ is the FLRW scale factor, cf. \cite{Steinacker:2019dii}.
Hence the torsion is  
\begin{align}
 \tensor{\bar T}{_{\r}_{\s}^\mu} 
  =  \frac{1}{R^2\r^2}\big( \d_\s^{\mu} \t_\r - \d^\mu_\r \t_\s \big) \ 
  \label{torsion-BG-explicit}
\end{align}
where $\t_\nu = \bar G_{\nu\s}\t^\s$.
As a check, 
the torsion tensor can also be computed directly from \eq{torsion-explicit}.
One can also verify 
\begin{align}
 \tensor{\bar T}{_{\mu}_{\s}^\mu} 
  =  2\r^{-1} \del_\s \bar\r = - \frac{3}{R^2\r^2} \t_\s \ .
\end{align}

The contorsion is
\begin{align}
 \tensor{\bar K}{_{\mu}_{\nu}^{\s}}
 &= \frac 1{R^2\r^2} ( \bar G_{\mu\nu} \t^\s  -  \d_\mu^{\s} \t_\nu ) 
\end{align}
which is antisymmetric in $\nu\s$.
We also need 
\begin{align}
 \tensor{\bar T}{_\r^{\mu}_\s}\tensor{\bar T}{_\nu_\mu^\r} 
  &= \frac 1{R^{4}\r^4}( - \t_\s  \t_\nu + \bar G_{\s\nu}  \t^\mu \t_\mu) = \tensor{\bar T}{_\r^{\mu}_\nu}\tensor{\bar T}{_\s_\mu^\r} \nn\\ 
  %
 \tensor{\bar K}{_\mu^\r_\nu}\tensor{\bar K}{_\r^\mu_\s}
  &= \frac 3{R^{4}\r^4} \t_\nu \t_\s \nn\\
 -\frac 12(\tensor{\bar T}{_\r^{\mu}_\s}\tensor{\bar T}{_\nu_\mu^\r} 
  + \tensor{\bar T}{_\r^{\mu}_\nu}\tensor{\bar T}{_\s_\mu^\r})
  - \tensor{\bar K}{_\mu^\r_\nu}\tensor{\bar K}{_\r^\mu_\s}
  &= -\frac 1{R^{4}\r^4}(2\t_\s  \t_\nu + \bar G_{\s\nu}  \t^\mu \t_\mu) \nn\\
  2 \bar\r^{-2} \del_\s \bar\r \del_\nu \bar\r &=  \frac{9}{2R^4\r^4} \t_\s \t_\nu  \nn\\
  \r\Box_G\r &= \frac{3}{2R^2}\Big(4 + \frac 12 \frac{\cosh^2(\eta)}{\sinh^2(\eta)}\Big)  
\end{align}
Now we check  the eom \eq{torsion-eq-nonlin-coord}
for the background.
Using 
\begin{align}
 \nabla_\nu \t^\r &= \del_\nu \t^\r + \tensor{\Gamma}{_{\nu}^\r_\mu} \t^\mu
  = \d_\nu^\r +  \frac{1}{R^2\r^2} \t_\nu \t^\r 
 \label{nabla-tau}
\end{align}
one finds
\begin{align}
 \nabla_\nu \tensor{\bar T}{^\nu_\r_\mu} 
  &= \frac{1}{R^2\r^2}\big(\bar G_{\r\mu} (3 +  \frac{1}{R^2\r^2} \t_\nu \t^\nu) 
    - \frac{1}{R^2\r^2} \t_\mu \t_\r \big)
\end{align}
so that  \eq{torsion-eq-nonlin-coord} is indeed satisfied,
\begin{align}
 \nabla_\nu \tensor{\bar T}{^\nu_\r_\mu} + \tensor{\bar T}{_\nu^{\s}_\mu}\tensor{\bar T}{_\s_\r^\nu} 
  = \frac{3}{R^2\r^2} \bar G_{\r\mu}
  = m^2 \bar \g_{\r\mu} \ 
\end{align}
using $m^2 = \frac{3}{R^2}$. 
We can also  check \eq{Box-r-onshell}.
The Ricci tensor is obtained from \eq{Ricci-vacuum-2} as
\begin{align}
 \cR_{\mu\nu} &=  - \frac 12 (\tensor{\bar T}{_\r^{\mu}_\s}\tensor{\bar T}{_\nu_\mu^\r} 
  + \tensor{\bar T}{_\r^{\mu}_\nu}\tensor{\bar T}{_\s_\mu^\r})
  - \tensor{\bar K}{_\mu^\r_\nu}\tensor{\bar K}{_\r^\mu_\s}
  + 2\bar \r^{-2}\del_\nu\bar \r \del_\s\bar \r 
   + \g_{\nu\s} \big( m^2 - \frac 12 \tensor{T}{_\nu^{\s}_\mu}\tensor{T}{_\s_\r^\nu} \g^{\mu\r}  \big)   \nn\\
  %
   &=   \frac{5}{2}\frac{1}{\r^{4}R^4} \t_\s \t_\nu  + \frac{1}{2\r^2R^2} G_{\nu\s} (6 - \coth^2(\eta)) \nn\\
   &\sim  \frac 52\frac 1{a(t)^2} \Big(\frac{1}{a(t)^2} \t_\s \t_\nu  +  G_{\nu\s}\Big) \ .
\end{align}
The second line is indeed the exact result as can be checked with more standard methods,
and the third line holds in the asymptotic regime for large $\eta$.
This method of computing the Ricci tensor using the torsion is in fact quite efficient.
Hence we obtain the  effective vacuum energy-momentum tensor \eq{em-tensor-effective} 
\begin{align}
 {\bf \bar T}_{\mu\nu} = \cR_{\mu\nu} - \frac 12 G_{\mu\nu} \cR  
 &=  \frac 1{\r^2R^2}\Big(\frac{5}{2} \frac{\t_\mu\t_\nu}{\r^2 R^2} 
 - \frac 12 G_{\mu\nu} \big(-\frac{5}{2}\frac{a(t)^2}{\r^2R^2}  
 + (6 - \coth^2(\eta))\big)\Big) \nn\\
 &\sim \frac{5}{2} \frac 1{a(t)^2}\Big(\frac{\t_\mu \t_\nu}{a(t)^2}  - \frac 12 G_{\mu\nu}\Big) \ .
\end{align}
In  comoving coordinates, this has the form ${\bf \bar T}_{\mu\nu} \sim \frac 1{a(t)^2}\diag(3,-1,-1,-1)$.
Hence pressure is negative with $\omega = \frac{p}{\r} \sim -\frac 13$, 
and the strong energy condition is (just) satisfied.

\subsection{Alternative derivation of the Ricci tensor in vaccum} 
\label{sec:alt-Ricci}

Here we give a direct derivation of the on-shell equation of the Ricci tensor in a more 
matrix-model-adapted approach,
using only the matrix equations of motion rather than the derived equation \eq{torsion-eq-nonlin-coord} for the torsion.
The result agrees perfectly with \eq{Ricci-vacuum-1}, which provides a consistency check of the present formalism.
We start with 
\begin{align}
 \{y^\mu,\Box\phi\} &= -  \{y^\mu,\{Z^{\dot\a},\{Z_{\dot\a},\phi\}\} 
  = -  \{\{y^\mu,Z^{\dot\a}\},\{Z_{\dot\a},\phi\}\} - \{Z^{\dot\a}, \{y^\mu,\{Z_{\dot\a},\phi\}\}  \nn\\
  &= \{\tensor{E}{^{\dot\a}^\mu},\{Z_{\dot\a},\phi\}\} - \{Z^{\dot\a}, \{\{y^\mu,Z_{\dot\a}\} ,\phi\}\}
      - \{Z^{\dot\a},\{Z_{\dot\a}, \{y^\mu,\phi\}\} \nn\\
  &= \{\tensor{E}{^{\dot\a}^\mu},\{Z_{\dot\a},\phi\}\} +  \{Z^{\dot\a}, \{\tensor{E}{_{\dot\a}^\mu},\phi\}\}  + \Box(\{y^\mu,\phi\}\nn\\
  &= -\{\{\tensor{E}{^{\dot\a}^\mu},Z_{\dot\a}\},\phi\}\} + 2 \{\tensor{E}{_{\dot\a}^\mu},\{Z_{\dot\a},\phi\}\}  + \Box(\{y^\mu,\phi\}
\end{align}
 for any scalar field $\phi$.
Now assume that $y^\mu$ are harmonic coordinates, 
\begin{align}
 \Box_G y^\mu = 0 = \Box y^\mu
\end{align}
recalling \eq{Box-Box-relation}.
It follows that
\begin{align}
 \{\tensor{E}{_{\dot\a}^\mu},Z^{\dot\a}\} &= \{\{Z_{\dot\a},y^\mu\},Z^{\dot\a}\} = \Box y^\mu = 0
 \label{harmonic-coords-2}
\end{align}
and we obtain
\begin{align}
 \Box\{y^\mu,\phi\} &=  \{y^\mu,\Box\phi\} - 2 \{\tensor{E}{^{\dot\a}^\mu},\{Z_{\dot\a},\phi\}\} \ .
 \label{intertwiner-nonlin}
\end{align}
Now we use the equation of motion \eq{matrix-eom} for $Z^{\dot\a}$,
\begin{align}
 \Box Z^{\dot\a} = m^2 Z^{\dot\a}
\end{align}
Then \eq{intertwiner-nonlin} gives for $\phi = Z^{\dot\b}$
\begin{align}
\Box\{y^\mu,Z^{\dot\b}\} &=  \{y^\mu,\Box Z^{\dot\b}\} - 2 \{\tensor{E}{^{\dot\a}^\mu},\{Z_{\dot\a},Z^{\dot\b}\}\} \nn\\
 &= m^2 \{y^\mu,Z^{\dot\b}\} + 2 \{\tensor{E}{_{\dot\a}^\mu},\hat\Theta^{{\dot\a}{\dot\b}}\}  \nn\\
 \Box\tensor{E}{_{\dot\b}^\mu} &= m^2 \tensor{E}{_{\dot\b}^\mu} - 2 \{\hat\Theta^{{\dot\b}{\dot\a}},\tensor{E}{_{\dot\a}^\mu}\} \ .
  \label{Box-vielbein-id} 
\end{align}
For the rescaled frame $\cE^\mu_{\dot\a} = \r^{-1} E^\mu_{\dot\a}$ \eq{rescaled-frame-def}, this gives 
\begin{align}
 \Box \tensor{\cE}{_{\dot\a}^\mu} 
 &= (m^2 - \r^{-1}\Box\r) \tensor{\cE}{_{\dot\a}^\mu}
   -2 \{\hat\Theta^{{\dot\a}{\dot\b}},\tensor{\cE}{_{\dot\b}^\mu}\}  
  - 2 \r^{-1}\{\hat\Theta^{{\dot\a}{\dot\b}},\r\} \tensor{\cE}{_{\dot\b}^\mu} 
  + 2\{Z^{\dot\g}, \tensor{\cE}{_{\dot\a}^\mu}\}\r^{-1}\{Z_{\dot\g},\r\} \ .
  \label{Box-vielbein-density-id}
\end{align}

\paragraph{Equation of motion for the metric.}

Now we use \eq{Box-vielbein-density-id} to compute $\Box G^{\mu\nu}$.
For the last term, we observe 
\begin{align}
  \big( \{Z^{\dot\g}, \cE^\mu_{\dot\a}\}\{Z_{\dot\g},\r\} \cE^\nu_{\dot\b}
  + \{Z^{\dot\g}, \cE^\nu_{\dot\a}\}\{Z_{\dot\g},\r\} \cE^\mu_{\dot\b}\big) \eta^{\dot\a\dot\b}
  = \{Z^{\dot\g}, G^{\mu\nu}\}\{Z_{\dot\g},\r\} 
\end{align}
and similarly
\begin{align}
 \{\hat\Theta^{{\dot\a}{\dot\b}},\r\}\cE^\mu_{\dot\b} \tensor{\cE}{_{\dot\a}^\nu} 
  + \{\hat\Theta^{{\dot\b}{\dot\a}},\r\}\cE^\nu_{\dot\a} \tensor{\cE}{_{\dot\b}^\mu} =0 \ .
\end{align}
Thus we obtain
\begin{align}
 \Box G^{\mu\nu} &=  2(m^2 - \r^{-1}\Box\r) G^{\mu\nu} 
  + 2\{Z^{\dot\g},G^{\mu\nu} \}\r^{-1}\{Z_{\dot\g},\r \}  \nn\\
 &\quad  - 2 \tensor{\cE}{_{\dot\a}^\mu} \{\hat\Theta^{{\dot\a}{\dot\b}},\tensor{\cE}{_{\dot\b}^\nu}\} 
   - 2\{\hat\Theta^{{\dot\b}{\dot\a}},\tensor{\cE}{_{\dot\a}^\mu}\} \tensor{\cE}{_{\dot\b}^\nu}
  -2 \{Z_{\dot\g}, \tensor{\cE}{_{\dot\a}^\mu}\} \{Z^{\dot\g}, \tensor{\cE}{^{\dot\a}^\nu}\} \nn\\
 &= 2(m^2 - \r^{-1} \Box\r) G^{\mu\nu} 
   + 2\{Z_{\dot\b},G^{\mu\nu} \}\r^{-1}\{Z_{\dot\b},\r \} \nn\\
  &\quad + 2 \tensor{T}{^\mu^\s^\r} \tensor{\tilde \G}{_\r_\s^\nu}
   + 2 \tensor{T}{^\nu^\s^\r} \tensor{\tilde \G}{_\r_\s^\mu}
   -2 \tensor{\tilde \G}{_{\r}_{\s}^{\mu}}\tensor{\tilde \G}{^{\r}^{\s}^{\nu}}
\end{align}
using
\begin{align}
  \{Z_{\dot\g}, \tensor{\cE}{_{\dot\a}^\mu}\} \{Z^{\dot\g}, \tensor{\cE}{^{\dot\a}^\nu}\} 
   &\sim \g^{\r\r'}\del_\r  \tensor{\cE}{_{\dot\a}^\mu}
                  \del_{\r'} \tensor{\cE}{^{\dot\a}^\nu}   
   =  \g^{\r\r'}\tensor{\tilde \G}{_{\r}_{\dot\a}^{\mu}}\tensor{\tilde \G}{_{{\r'}}^{\dot\a}^{\nu}} \nn\\
   &= \g^{\r\r'}G^{\s\s'}\tensor{\tilde \G}{_{\r}_{\s}^{\mu}}\tensor{\tilde \G}{_{\r'}_{\s'}^{\nu}}
    \nn\\
  \tensor{\cE}{_{\dot\a}^\mu} \{\hat\Theta^{{\dot\a}{\dot\b}},\tensor{\cE}{_{\dot\b}^\nu}\} 
   &\sim \tensor{\cE}{^{\dot\a}^\mu} \tensor{ T}{_{\dot\a}_{\dot\b}^\r}\del_\r \tensor{\cE}{^{\dot\b}^\nu} 
    = - \g^{\mu\mu'}G^{\s\s'}\tensor{T}{_{\mu'}_\s^\r} \tensor{\tilde \G}{_{\r}_{\s'}^\nu}
\end{align}
in the asymptotic regime,
where $\tensor{\tilde \G}{_\r_\s^\nu}$ is the  Weitzenböck connection for the rescaled frame $\cE$
and $\tensor{T}{_{\dot\a}_{\dot\b}^\r}$ is the torsion of basic frame $E$. 
Note that
coordinate indices are raised and lowered with $G^{\mu\nu}$ here.
Now consider  harmonic normal coordinates $y^\mu$ \eq{harm-NC-def}
w.r.t. $G^{\r\s}$  at the point $p$.
Then $\{Z_{\dot\b},G^{\mu\nu} \}$ vanishes at $p$.
Furthermore, the Weitzenb\"ock connection $\tensor{\tilde\G}{_{\r}_{\s}^{\mu}}$ can then be expressed 
in terms of the (con)torsion via 
$\tensor{\tilde \G}{_{\r}_{\s}^{\mu}} = \tensor{\cK}{_{\r}_{\s}^{\mu}}$ \eq{Weitzenbock-Torsion-RiemanNC}.
Thus
\begin{align}
 G_{\mu\mu'}G_{\nu\nu'} \frac 12\Box G^{\mu'\nu'} = (m^2 - \r^{-1} \Box\r)G_{\mu\nu} 
  + \r^2 \big(\tensor{T}{_\mu_\s^\r} \tensor{\cK}{_\r^\s_\nu}
   + \tensor{T}{_\nu_\s^\r} \tensor{\cK}{_\r^\s_\mu}
   - \tensor{\cK}{_{\r}^{\s}_{\mu}}\tensor{\cK}{^{\r}_{\s}_{\nu}}\big) \ .
 \label{Box-G-torsion}
\end{align}
Now in  harmonic NC at $p$, 
the Ricci tensor at $p$ is obtained  as \cite{deturck1981some}
\begin{align}
  \cR_{\mu\nu}[G]|_p &= \frac 12 \Box_G G_{\mu\nu} \ = - G_{\mu\mu'} G_{\nu\nu'} \Box_G G^{\mu'\nu'} \ .
   \label{metric-NC-explicit}
\end{align}
Together with  $\Box = \r^2 \Box_G$ \eq{Box-Box-relation},
we obtain the tensorial equation
\begin{align}
  \cR_{\mu\nu} =  - \tensor{T}{_\mu_\s^\r} \tensor{\cK}{_\r^\s_\nu}
   - \tensor{T}{_\nu_\s^\r} \tensor{\cK}{_\r^\s_\mu}
   + \tensor{\cK}{_{\r}^{\s}_{\mu}}\tensor{\cK}{^{\r}_{\s}_{\nu}}
  + \r^{-2}(-m^2 + \r\Box_G\r) G_{\mu\nu} 
  \label{Ricci-alternative-0}
\end{align}
in the asymptotic regime. 
To see that this coincides with \eq{Ricci-vacuum-1}, we need to compare the contractions of the (con)torsion terms.
One can easily see using \eq{Levi-contorsion-full} and \eq{tilde-T-T-contract} that
\begin{align}
 \tensor{T}{_\mu_\s^\r} \tensor{\cK}{_\r^\s_\nu} + \tensor{T}{_\nu_\s^\r} \tensor{\cK}{_\r^\s_\mu} 
  - \tensor{\cK}{_\r^\s_\mu}  \tensor{\cK}{^\r_\s_\nu}  
  &= \tensor{T}{_\mu_\s^\r}\tensor{K}{_\r^\s_\nu} 
   + \tensor{T}{_\nu_\s^\r} \tensor{K}{_\r^\s_\mu} 
    - \tensor{K}{^\r_\s_\mu}\tensor{K}{_\r^\s_\nu} \nn\\
  &\quad - 2 \r^{-2}\del_\nu \r \del_\mu \r - G_{\mu\nu} \r^{-2}\del^\s \r \del_\s \r    
\end{align}
using
\begin{align}
 \tensor{K}{_\nu^\s_\mu} + \tensor{K}{_\mu^\s_\nu} 
 = \tensor{T}{_\mu^\s_\nu} + \tensor{T}{_\nu^\s_\mu} \ .
\end{align}
Therefore we obtain
\begin{align}
  \cR_{\mu\nu} = - \tensor{T}{_\mu_\s^\r} \tensor{K}{_\r^\s_\nu}
   - \tensor{T}{_\nu_\s^\r} \tensor{K}{_\r^\s_\mu}
   + \tensor{K}{_{\r}^{\s}_{\mu}}\tensor{K}{^{\r}_{\s}_{\nu}}
   + 2 \r^{-2}\del_\nu \r \del_\mu \r 
  + \r^{-2}(-m^2 + \r\Box_G\r + \del^\s\r \del_\s\r) G_{\mu\nu} \ .
  \label{Ricci-alternative}
\end{align}
Expressing the tensorial part in terms of torsion only, one finds using \eq{contraction-T-1} 
\begin{align}
 &\tensor{T}{_\mu_\s^\r} \tensor{K}{_\r^\s_\nu} + \tensor{T}{_\nu_\s^\r} \tensor{K}{_\r^\s_\mu} 
  - \tensor{K}{^\r_\s_\mu}  \tensor{K}{_\r^\s_\nu}  \nn\\
  &= \frac 12\tensor{T}{_\mu^\s^\r} (\tensor{T}{_\nu_\r_\s} + \tensor{T}{_\nu_\s_\r}) 
  + \frac 12\tensor{T}{_\mu^\s^\r} \tensor{T}{_\r_\s_\nu} 
  + \frac 12\tensor{T}{_\nu^\s^\r} \tensor{T}{_\r_\s_\mu}
  + \frac 14 \tensor{T}{^\s^\r_\mu} \tensor{T}{_\r_\s_\nu} 
\end{align}
while for the other form \eq{Ricci-vacuum-1} we need 
\begin{align}
 & - \frac 12\tensor{T}{_\mu_\s^\r} \tensor{T}{_\r^\s_\nu} 
  - \frac 12\tensor{T}{_\nu_\s^\r} \tensor{T}{_\r^\s_\mu}
  -\tensor{K}{_\s^\r_\mu}  \tensor{K}{_\r^\s_\nu}  \nn\\
  &= - \frac 12\tensor{T}{_\mu^\s^\r} \tensor{T}{_\r_\s_\nu} 
  - \frac 12\tensor{T}{_\nu^\s^\r} \tensor{T}{_\r_\s_\mu}
  +\frac 14 \Big(- 2\tensor{T}{_\mu^\s^\r} (\tensor{T}{_\nu_\r_\s} + \tensor{T}{_\nu_\s_\r})
  - \tensor{T}{^\s^\r_\mu}\tensor{T}{_\r_\s_\nu} \Big)
\end{align}
Hence the two expressions for the Ricci tensor are indeed identical.

\subsection{Contractions of the (con)torsion}
\label{sec:torsion-contract}

We need the following rank 2 contractions with the contorsion
\begin{align}
 \tensor{T}{^\mu^\s^\r} \tensor{K}{_\r_\s_\nu} 
 &= \frac 12 
\big( \tensor{T}{^\mu^\s^\r} (\tensor{T}{_\nu_\r_\s} + \tensor{T}{_\nu_\s_\r})
 +\tensor{T}{^\mu^\s^\r} \tensor{T}{_\r_\s_\nu} \big) \nn\\
 %
 -\tensor{K}{^\s^\r^\mu}  \tensor{K}{_\r_\s_\nu}  
 &= \frac 14 \Big(- 2\tensor{T}{^\mu^\s^\r} (\tensor{T}{_\nu_\r_\s} + \tensor{T}{_\nu_\s_\r})
  + \tensor{T}{^\r^\s^\mu}\tensor{T}{_\r_\s_\nu} \Big) \nn\\
 -\tensor{K}{^\r^\s^\mu}  \tensor{K}{_\r_\s_\nu}  
 &= \frac 14 \Big(-2\tensor{T}{^\mu^\r^\s}(\tensor{T}{_\nu_\r_\s} + \tensor{T}{_\nu_\s_\r}) 
  + \tensor{T}{^\s^\r^\mu} \tensor{T}{_\r_\s_\nu}  \Big) 
 \label{contraction-T-1}
 \end{align}
 Note that there are 3 independent rank 2 contractions.
The contraction of the second expression gives  
 \begin{align}
 -\tensor{K}{_\s^\r^\mu}  \tensor{K}{_\r^\s_\mu}  
    &= - \frac 14  \tensor{T}{^\mu^\s^\r} (2\tensor{T}{_\mu_\r_\s} + \tensor{T}{_\mu_\s_\r})
 \end{align}
Therefore
\begin{align}
 - \tensor{T}{_\mu^\r^\s} \tensor{T}{_\r_\s^\mu} 
  -\tensor{K}{_\s^\r^\mu}  \tensor{K}{_\r^\s_\mu}  
 &= - \frac 12 \tensor{T}{^\s^\mu^\r}\tensor{T}{_\mu_\r_\s} 
 - \frac 14 \tensor{T}{^\mu^\s^\r}\tensor{T}{_\mu_\s_\r}  
 \label{scalar-contraction-T}
\end{align}

%
%

\bibliographystyle{JHEP}
\bibliography{papers}

\providecommand{\href}[2]{#2}\begingroup\raggedright\begin{thebibliography}{10}

\bibitem{Goroff:1985th}
M.~H. Goroff and A.~Sagnotti, {\it {The Ultraviolet Behavior of Einstein
  Gravity}},  {\em Nucl. Phys.} {\bf B266} (1986) 709--736.

\bibitem{Minwalla:1999px}
S.~Minwalla, M.~Van~Raamsdonk, and N.~Seiberg, {\it Noncommutative perturbative
  dynamics},  {\em JHEP} {\bf 02} (2000) 020,
  [\href{http://arxiv.org/abs/hep-th/9912072}{{\tt hep-th/9912072}}].

\bibitem{Matusis:2000jf}
A.~Matusis, L.~Susskind, and N.~Toumbas, {\it {The IR / UV connection in the
  noncommutative gauge theories}},  {\em JHEP} {\bf 12} (2000) 002,
  [\href{http://arxiv.org/abs/hep-th/0002075}{{\tt hep-th/0002075}}].

\bibitem{Steinacker:2016nsc}
H.~C. Steinacker, {\it {String states, loops and effective actions in
  noncommutative field theory and matrix models}},  {\em Nucl. Phys.} {\bf
  B910} (2016) 346--373, [\href{http://arxiv.org/abs/1606.00646}{{\tt
  arXiv:1606.00646}}].

\bibitem{Ishibashi:1996xs}
N.~Ishibashi, H.~Kawai, Y.~Kitazawa, and A.~Tsuchiya, {\it {A Large N reduced
  model as superstring}},  {\em Nucl. Phys.} {\bf B498} (1997) 467–491,
  [\href{http://arxiv.org/abs/hep-th/9612115}{{\tt hep-th/9612115}}].

\bibitem{Rivelles:2002ez}
V.~O. Rivelles, {\it {Noncommutative field theories and gravity}},  {\em Phys.
  Lett.} {\bf B558} (2003) 191–196,
  [\href{http://arxiv.org/abs/hep-th/0212262}{{\tt hep-th/0212262}}].

\bibitem{Yang:2006dk}
H.~S. Yang, {\it {Emergent Gravity from Noncommutative Spacetime}},  {\em Int.
  J. Mod. Phys.} {\bf A24} (2009) 4473–4517,
  [\href{http://arxiv.org/abs/hep-th/0611174}{{\tt hep-th/0611174}}].

\bibitem{Steinacker:2010rh}
H.~Steinacker, {\it {Emergent Geometry and Gravity from Matrix Models: an
  Introduction}},  {\em Class. Quant. Grav.} {\bf 27} (2010) 133001,
  [\href{http://arxiv.org/abs/1003.4134}{{\tt arXiv:1003.4134}}].

\bibitem{Sperling:2019xar}
M.~Sperling and H.~C. Steinacker, {\it {Covariant cosmological quantum
  space-time, higher-spin and gravity in the IKKT matrix model}},  {\em JHEP}
  {\bf 07} (2019) 010, [\href{http://arxiv.org/abs/1901.03522}{{\tt
  arXiv:1901.03522}}].

\bibitem{Steinacker:2019awe}
H.~C. Steinacker, {\it {Higher-spin kinematics \& no ghosts on quantum
  space-time in Yang-Mills matrix models}},
  \href{http://arxiv.org/abs/1910.00839}{{\tt arXiv:1910.00839}}.

\bibitem{deMedeiros:2004wb}
P.~de~Medeiros and S.~Ramgoolam, {\it {Non-associative gauge theory and higher
  spin interactions}},  {\em JHEP} {\bf 03} (2005) 072,
  [\href{http://arxiv.org/abs/hep-th/0412027}{{\tt hep-th/0412027}}].

\bibitem{Steinacker:2019dii}
H.~C. Steinacker, {\it {Scalar modes and the linearized Schwarzschild solution
  on a quantized FLRW space-time in Yang-Mills matrix models}},  {\em Class.
  Quant. Grav.} {\bf 36} (2019), no.~20 205005,
  [\href{http://arxiv.org/abs/1905.07255}{{\tt arXiv:1905.07255}}].

\bibitem{Vasiliev:1990en}
M.~A. Vasiliev, {\it {Consistent equation for interacting gauge fields of all
  spins in (3+1)-dimensions}},  {\em Phys. Lett.} {\bf B243} (1990) 378--382.

\bibitem{Didenko:2014dwa}
V.~E. Didenko and E.~D. Skvortsov, {\it {Elements of Vasiliev theory}},
  \href{http://arxiv.org/abs/1401.2975}{{\tt arXiv:1401.2975}}.

\bibitem{Ponomarev:2016lrm}
D.~Ponomarev and E.~D. Skvortsov, {\it {Light-Front Higher-Spin Theories in
  Flat Space}},  {\em J. Phys.} {\bf A50} (2017), no.~9 095401,
  [\href{http://arxiv.org/abs/1609.04655}{{\tt arXiv:1609.04655}}].

\bibitem{Skvortsov:2018jea}
E.~D. Skvortsov, T.~Tran, and M.~Tsulaia, {\it {Quantum Chiral Higher Spin
  Gravity}},  {\em Phys. Rev. Lett.} {\bf 121} (2018), no.~3 031601,
  [\href{http://arxiv.org/abs/1805.00048}{{\tt arXiv:1805.00048}}].

\bibitem{Tseytlin:2002gz}
A.~A. Tseytlin, {\it {On limits of superstring in AdS(5) x S**5}},  {\em Theor.
  Math. Phys.} {\bf 133} (2002) 1376--1389,
  [\href{http://arxiv.org/abs/hep-th/0201112}{{\tt hep-th/0201112}}]. [Teor.
  Mat. Fiz.133,69(2002)].

\bibitem{Segal:2002gd}
A.~Y. Segal, {\it {Conformal higher spin theory}},  {\em Nucl. Phys.} {\bf
  B664} (2003) 59--130, [\href{http://arxiv.org/abs/hep-th/0207212}{{\tt
  hep-th/0207212}}].

\bibitem{Pope:1989vj}
C.~N. Pope and P.~K. Townsend, {\it {Conformal Higher Spin in
  (2+1)-dimensions}},  {\em Phys. Lett.} {\bf B225} (1989) 245--250.

\bibitem{Fradkin:1989xt}
E.~S. Fradkin and V.~{\relax Ya}. Linetsky, {\it {A Superconformal Theory of
  Massless Higher Spin Fields in $D$ = (2+1)}},  {\em Mod. Phys. Lett.} {\bf
  A4} (1989) 731. [Annals Phys.198,293(1990)].

\bibitem{Grigoriev:2019xmp}
M.~Grigoriev, I.~Lovrekovic, and E.~Skvortsov, {\it {New Conformal Higher Spin
  Gravities in $3d$}},  {\em JHEP} {\bf 01} (2020) 059,
  [\href{http://arxiv.org/abs/1909.13305}{{\tt arXiv:1909.13305}}].

\bibitem{Hanada:2005vr}
M.~Hanada, H.~Kawai, and Y.~Kimura, {\it {Describing curved spaces by
  matrices}},  {\em Prog. Theor. Phys.} {\bf 114} (2006) 1295–1316,
  [\href{http://arxiv.org/abs/hep-th/0508211}{{\tt hep-th/0508211}}].

\bibitem{MacDowell:1977jt}
S.~W. MacDowell and F.~Mansouri, {\it {Unified Geometric Theory of Gravity and
  Supergravity}},  {\em Phys. Rev. Lett.} {\bf 38} (1977) 739. [Erratum: Phys.
  Rev. Lett.38,1376(1977)].

\bibitem{Langmann:2001yr}
E.~Langmann and R.~J. Szabo, {\it {Teleparallel gravity and dimensional
  reductions of noncommutative gauge theory}},  {\em Phys. Rev.} {\bf D64}
  (2001) 104019, [\href{http://arxiv.org/abs/hep-th/0105094}{{\tt
  hep-th/0105094}}].

\bibitem{Ho:2015bia}
P.-M. Ho, {\it {Generalized Yang-Mills Theory and Gravity}},  {\em Phys. Rev.}
  {\bf D93} (2016), no.~4 044062, [\href{http://arxiv.org/abs/1501.05378}{{\tt
  arXiv:1501.05378}}].

\bibitem{Bonora:2018ggh}
L.~Bonora, M.~Cvitan, P.~Dominis~Prester, S.~Giaccari, and T.~Stemberga, {\it
  {HS in flat spacetime. YM-like models}},
  \href{http://arxiv.org/abs/1812.05030}{{\tt arXiv:1812.05030}}.

\bibitem{Steinacker:2016vgf}
H.~C. Steinacker, {\it {Emergent gravity on covariant quantum spaces in the
  IKKT model}},  {\em JHEP} {\bf 12} (2016) 156,
  [\href{http://arxiv.org/abs/1606.00769}{{\tt arXiv:1606.00769}}].

\bibitem{Kim:2011cr}
S.-W. Kim, J.~Nishimura, and A.~Tsuchiya, {\it {Expanding (3+1)-dimensional
  universe from a Lorentzian matrix model for superstring theory in
  (9+1)-dimensions}},  {\em Phys. Rev. Lett.} {\bf 108} (2012) 011601,
  [\href{http://arxiv.org/abs/1108.1540}{{\tt arXiv:1108.1540}}].

\bibitem{Nishimura:2019qal}
J.~Nishimura and A.~Tsuchiya, {\it {Complex Langevin analysis of the space-time
  structure in the Lorentzian type IIB matrix model}},  {\em JHEP} {\bf 06}
  (2019) 077, [\href{http://arxiv.org/abs/1904.05919}{{\tt arXiv:1904.05919}}].

\bibitem{Steinacker:2019fcb}
H.~C. Steinacker, {\it {On the quantum structure of space-time, gravity, and
  higher spin}},  \href{http://arxiv.org/abs/1911.03162}{{\tt
  arXiv:1911.03162}}.

\bibitem{Sperling:2018xrm}
M.~Sperling and H.~C. Steinacker, {\it {The fuzzy 4-hyperboloid $H^4_n$ and
  higher-spin in Yang–Mills matrix models}},  {\em Nucl. Phys.} {\bf B941}
  (2019) 680--743, [\href{http://arxiv.org/abs/1806.05907}{{\tt
  arXiv:1806.05907}}].

\bibitem{Steinacker:2017bhb}
H.~C. Steinacker, {\it {Quantized open FRW cosmology from Yang-Mills matrix
  models}},  {\em Phys. Lett.} {\bf B782} (2017) 2018,
  [\href{http://arxiv.org/abs/1710.11495}{{\tt arXiv:1710.11495}}].

\bibitem{aldrovandi2012teleparallel}
R.~Aldrovandi and J.~G. Pereira, {\em Teleparallel gravity: an introduction},
  vol.~173.
\newblock Springer Science \& Business Media, 2012.

\bibitem{Ciric:2016isg}
M.~Dimitrijević~Ćirić, B.~Nikolić, and V.~Radovanović, {\it
  {Noncommutative $SO(2,3)_\star$ gravity: Noncommutativity as a source of
  curvature and torsion}},  {\em Phys. Rev.} {\bf D96} (2017), no.~6 064029,
  [\href{http://arxiv.org/abs/1612.00768}{{\tt arXiv:1612.00768}}].

\bibitem{Chatzistavrakidis:2011gs}
A.~Chatzistavrakidis, H.~Steinacker, and G.~Zoupanos, {\it {Intersecting branes
  and a standard model realization in matrix models}},  {\em JHEP} {\bf 09}
  (2011) 115, [\href{http://arxiv.org/abs/1107.0265}{{\tt arXiv:1107.0265}}].

\bibitem{Aschieri:2006uw}
P.~Aschieri, T.~Grammatikopoulos, H.~Steinacker, and G.~Zoupanos, {\it
  {Dynamical generation of fuzzy extra dimensions, dimensional reduction and
  symmetry breaking}},  {\em JHEP} {\bf 09} (2006) 026,
  [\href{http://arxiv.org/abs/hep-th/0606021}{{\tt hep-th/0606021}}].

\bibitem{Sperling:2018hys}
M.~Sperling and H.~C. Steinacker, {\it {Intersecting branes, Higgs sector, and
  chirality from $ \mathcal{N} $ = 4 SYM with soft SUSY breaking}},  {\em JHEP}
  {\bf 04} (2018) 116, [\href{http://arxiv.org/abs/1803.07323}{{\tt
  arXiv:1803.07323}}].

\bibitem{Aoki:2014cya}
H.~Aoki, J.~Nishimura, and A.~Tsuchiya, {\it {Realizing three generations of
  the Standard Model fermions in the type IIB matrix model}},  {\em JHEP} {\bf
  05} (2014) 131, [\href{http://arxiv.org/abs/1401.7848}{{\tt
  arXiv:1401.7848}}].

\bibitem{Hatakeyama:2019jyw}
K.~Hatakeyama, A.~Matsumoto, J.~Nishimura, A.~Tsuchiya, and A.~Yosprakob, {\it
  {The emergence of expanding space-time and intersecting D-branes from
  classical solutions in the Lorentzian type IIB matrix model}},
  \href{http://arxiv.org/abs/1911.08132}{{\tt arXiv:1911.08132}}.

\bibitem{deturck1981some}
D.~M. DeTurck and J.~L. Kazdan, {\em Some regularity theorems in Riemannian
  geometry}, vol.~14.
\newblock 1981.

\end{thebibliography}\endgroup


\end{document}